\def\BibTeX{{\rm B\kern-.05em{\sc i\kern-.025em b}\kern-.08em
    T\kern-.1667em\lower.7ex\hbox{E}\kern-.125emX}}
\newtheorem{theorem}{Theorem}[section]
\newtheorem{corollary}{Corollary}[section]
\newtheorem{lemma}{Lemma}[section]
\newtheorem{definition}{Definition}[section]
\def\eop{\hfill$\Box$}
\def\proof{{\em Proof: }}
\def\R{{\mathbb R}}
\def\C{{\mathbb C}}
\def\A{{\cal A}}
\def\B{{\cal B}}
\def\M{{\cal M}}
\def\S{{\cal S}}
\begin{document}

\title{Synchronization in dynamical systems coupled via multiple directed networks}

\author{\IEEEauthorblockN{Chai Wah Wu}\\
\IEEEauthorblockA{\textit{IBM T.J. Watson Research Center} \\
Yorktown Heights, NY 10598, USA \\
cwwu@us.ibm.com\\March 24, 2021}
}

\bibliographystyle{ieeetr}

\maketitle

\begin{abstract}
We study synchronization and consensus in a group of dynamical systems coupled via multiple directed networks.  We show that even though the coupling in a single network may not be sufficient to synchronize the systems, combination of multiple networks can contribute to synchronization.  We illustrate how the effectiveness of a collection of networks to synchronize the coupled systems depends on the graph topology.  In particular, we show that if the graph sum is a directed graph whose reversal contains a spanning directed tree, then  the network synchronizes if the coupling is strong enough. This is intuitive as there is a root node that influence every other node via a set of edges where each edge in the set is in one of the networks.
\end{abstract}

\begin{IEEEkeywords}
nonlinear dynamical systems, graph theory
\end{IEEEkeywords}

\section{Introduction}

There has been much research activity in studying the synchronization and consensus behavior in networks of coupled dynamical systems 
\cite{synch_special_issue:1997,chaos:synch_special_issue:1997,synch_special_issue_ijbc:2000,chaos:synch_special_issue:2003,wu:synch_book:2007,chaos:synch_special_issue:2008,kocarev:synch:2013,Baldi2020,Rahnama2020}.
The coupled systems model various types of networks both man-made and naturally occurring, such as social networks, transportation networks, genetic networks and neural networks.   In many cases the individual systems or {\em agents} are connected via multiple networks.  For instance, automobiles are connected via a transportation network, but they can also connected via cell phones (or other vehicle-to-vehicle technologies such as DSRC \cite{Abboud2016}) in a communication network.  Similarly, people can be connected via multiple social networks such as Facebook and LinkedIn. The Erd\H{o}s-Bacon number is the sum of the distance in the actor and in the mathematical collaboration network.

Prior work on 
systems coupled via multiple networks include \cite{wu:synch-time-varying-delay:2005}
where synchronization is studied in continuous-time dynamical systems coupled via $2$ networks with one of the networks coupling delayed state variables and conditions were provided under which the system is synchronizing.
Refs. \cite{Sorrentino:2012,Gambuzza2015,Genio2016,SevillaEscoboza2016,Leyva2018,Majhi2019,Tang2019,Blaha2019,Bera2019,DellaRossa2020,Kumar2020,BurbanoL.2020} also studied synchronization in continuous-time dynamical systems coupled via multiple networks where a local stability approach numerically computing Lyapunov exponents or a mean field approximation is used. In some of these works, networks are studied whose Laplacian matrices  are simultaneously diagonalizable. This paper differs from prior work in that 1) we use the Lyapunov function approach which provides a guaranteed synchronization criteria, 2) consider simultaneously triangularizable matrices which is a superset of the simultaneously diagonalizable matrices, 3) consider directed networks 4) and consider extensions to discrete-time systems.

In this paper we study synchronization and consensus in dynamical systems coupled under multiple networks and study how these multiple networks can work together to reach synchronization even though each individual network might not be able to effect synchronization.  
We consider the case of continuous time systems in Section \ref{sec:continuous}.  
The case of discrete time systems will be studied in Section \ref{sec:discrete}.

\section{Mathematical preliminaries}
We define $e_i$ as the $i$-th unit vector and $E_i$ as the diagonal square matrix such that all entries are $0$ except that the $i$-th entry on the diagonal is $1$
and define $e$ as the vector of all $1$'s and $I$ as the identity matrix.  For a Hermitian matrix, the eigenvalues are listed as $\lambda_1\leq \lambda_2\cdots \leq\lambda_n$.
We use $A\succ 0$ and $A \succeq 0$ to denote that the matrix $A$ is positive definite and positive semidefinite respectively\footnote{For non-Hermitian matrices, we say $A$ is positive (semi)definite if $A+A^H$ is positive (semi)definite.}. 
The following simple results generalize the results in \cite{wu:chua_kronecker_product_synch_95} and in \cite[Lemma 3.1]{wu:synch_book:2002} and prove to be useful when studying coupling via multiple networks:

\begin{lemma}\label{lem:kron-ev}
Let $\M = \{A_i\}$ be a set of $n$ by $n$ simultaneously triangularizable matrices, i.e. there exists a nonsingular matrix $P$ which transforms these matrices simultaneously into upper triangular form, i.e. $\exists P\forall i,
PA_iP^{-1}= J_i$ is upper triangular with corresponding eigenvalues $\lambda_{ij}$ on the diagonal of $J_i$.  Let $\{B_i\}$ be a set of $m$ by $m$ matrices.  Then the $nm$ by $nm$ matrix
$C = \sum_i A_i\otimes B_i$ has eigenvalues $\gamma_{jk}$ where $\gamma_{jk}$ are the eigenvalues of $\sum_i  \lambda_{ij}B_i$ for each $j$.
\end{lemma}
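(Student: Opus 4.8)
The plan is to reduce the problem to a single similarity transformation that triangularizes the ``$A$-part'' while leaving the ``$B$-part'' untouched, and then to read off the spectrum from the resulting block-triangular matrix. First I would apply the similarity $Q = P\otimes I$ to $C$. By the mixed-product property of the Kronecker product, $(P\otimes I)(A_i\otimes B_i)(P^{-1}\otimes I) = (PA_iP^{-1})\otimes B_i = J_i\otimes B_i$, so that
\[
QCQ^{-1} \;=\; \sum_i (P\otimes I)(A_i\otimes B_i)(P^{-1}\otimes I) \;=\; \sum_i J_i\otimes B_i \;=:\; \tilde C .
\]
Since similar matrices have identical spectra, it suffices to compute the eigenvalues of $\tilde C$.

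Next I would view $\tilde C$ as an $n\times n$ array of $m\times m$ blocks, in which the $(j,l)$ block equals $\sum_i (J_i)_{jl}\,B_i$. Here the simultaneous triangularizability hypothesis does the essential work: because the \emph{same} $P$ puts every $A_i$ into upper triangular form, each $J_i$ shares the pattern $(J_i)_{jl}=0$ for $j>l$, and summing over $i$ cannot create nonzero subdiagonal blocks. Hence $\tilde C$ is block upper triangular, and its $j$-th diagonal block is exactly $\sum_i (J_i)_{jj}B_i = \sum_i \lambda_{ij}B_i$.

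Finally I would invoke the standard fact that the characteristic polynomial of a block upper triangular matrix factors as the product of the characteristic polynomials of its diagonal blocks, which follows by a block-column (Laplace-type) expansion of $\det(\tilde C-\gamma I)$. This gives that the spectrum of $\tilde C$, and therefore of $C$, is the union over $j$ of the spectra of $\sum_i \lambda_{ij}B_i$, i.e. the eigenvalues are precisely the claimed $\gamma_{jk}$.

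There is no deep obstacle here; the computation is essentially bookkeeping. The two points that warrant care are verifying that $\tilde C$ is genuinely block upper triangular after summation (which is guaranteed by the \emph{shared} triangularizing matrix $P$, not merely by each $A_i$ being individually triangularizable), and the clean justification of the block-triangular eigenvalue factorization. I expect the former to be the conceptual crux, since it is exactly where the simultaneous triangularizability assumption is indispensable.
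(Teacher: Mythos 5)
Your proof is correct and follows exactly the paper's argument: conjugate by $P\otimes I$ to obtain $\sum_i J_i\otimes B_i$, observe it is block upper triangular with diagonal blocks $\sum_i \lambda_{ij}B_i$, and read off the spectrum. You merely spell out the details (mixed-product property, block-determinant factorization) that the paper leaves implicit.
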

\begin{proof}
The matrix  $(P\otimes I)C(P^{-1}\otimes I) = \sum_i (J_i\otimes B_i)$ is block upper triangular with diagonal blocks equal to $\sum_i  \lambda_{ij}B_i$.  Thus the eigenvalues are the same as the eigenvalues of $\sum_i  \lambda_{ij}B_i$ for $1\leq j\leq n$.\footnote{Since all $G_i$ share the same set of unit length eigenvectors $v_j$, we denote $\lambda_{ij}$ as the eigenvalue of $B_i$ corresponding to the eigenvector $v_j$. This convention will used throughout.}
\eop
\end{proof}

It is clear that if $\M$ satisfies the condition of Lemma \ref{lem:kron-ev}, so does $\M \cup \{\alpha I\}$ where $\alpha \in \C$.

\begin{corollary}
[\cite{wu:chua_kronecker_product_synch_95}]
The eigenvalues of $I\otimes B_1 + A\otimes B_2$ are equal to the eigenvalues of $B_1+\lambda_i B_2$ for each $i$, where $\lambda_i$ are the eigenvalues of $A$.
\end{corollary}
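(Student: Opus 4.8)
The plan is to obtain this corollary as an immediate specialization of Lemma \ref{lem:kron-ev} applied to a collection $\M$ containing just two matrices. Specifically, I would set $A_1 = I$ and $A_2 = A$ and retain the two given matrices $B_1$ and $B_2$ as the second tensor factors, so that the matrix under consideration becomes exactly $C = A_1\otimes B_1 + A_2\otimes B_2 = I\otimes B_1 + A\otimes B_2$.

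The one prerequisite to verify is that the hypothesis of the lemma holds for this collection, i.e.\ that $\{I, A\}$ is simultaneously triangularizable. First I would recall that any single matrix $A$ over $\C$ admits a Schur (or Jordan) triangularization: there is a nonsingular $P$ with $PAP^{-1}$ upper triangular. Since $PIP^{-1} = I$ is trivially upper triangular under the same $P$, the pair $\{I, A\}$ is simultaneously triangularized by this $P$. Equivalently, one may invoke the remark following Lemma \ref{lem:kron-ev}: starting from the (vacuously) simultaneously triangularizable singleton $\{A\}$ and adjoining $\alpha I$ with $\alpha = 1$ preserves the property. With the hypothesis in place, I would then read off the diagonal eigenvalues the lemma requires. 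For $A_1 = I$ every diagonal entry of the triangular form is $1$, so $\lambda_{1j} = 1$ for all $j$; for $A_2 = A$ the diagonal entries are the eigenvalues of $A$, so $\lambda_{2j} = \lambda_j$. The lemma then asserts that the eigenvalues of $C$ are, for each $j$, the eigenvalues of $\sum_i \lambda_{ij} B_i = \lambda_{1j} B_1 + \lambda_{2j} B_2 = B_1 + \lambda_j B_2$, which is precisely the claimed conclusion.

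I do not anticipate any genuine obstacle, as the statement is a direct corollary of the preceding lemma; the only point requiring a sentence of justification is the simultaneous triangularizability of $\{I, A\}$, which holds because the identity is invariant under every similarity transformation and hence is triangularized by the very matrix that triangularizes $A$.
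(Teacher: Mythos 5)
Your proposal is correct and matches the paper's intended argument: the corollary is stated as an immediate consequence of Lemma \ref{lem:kron-ev} with $\M=\{I,A\}$, using exactly the observation (made in the remark following the lemma) that the identity can always be adjoined to a simultaneously triangularizable family. Nothing further is needed.
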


\begin{corollary}
  Let $p_i$ be polynomials and $A$ be a square matrix with eigenvalues $\lambda_j$.  Then $C = \sum_i p_i(A)\otimes B_i$ has eigenvalues $\gamma_{jk}$ where $\gamma_{jk}$ are the eigenvalues of $\sum_i p_i(\lambda_j)B_i$ for each $j$.
\end{corollary}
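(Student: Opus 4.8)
The plan is to apply Lemma~\ref{lem:kron-ev} directly, with the role of $\{A_i\}$ played by $\{p_i(A)\}$. First I would observe that the matrices $p_i(A)$ are automatically simultaneously triangularizable: since $A$ is a square matrix, there is a nonsingular $P$ (for instance from the Schur or Jordan decomposition) with $PAP^{-1}=J$ upper triangular and the eigenvalues $\lambda_j$ appearing on its diagonal. Because $P p_i(A) P^{-1} = p_i(PAP^{-1}) = p_i(J)$, this one fixed $P$ triangularizes every $p_i(A)$ at once, so the hypothesis of Lemma~\ref{lem:kron-ev} is met.

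The key step is to identify the diagonal entries of $p_i(J)$. Since $J$ is upper triangular with diagonal $(\lambda_1,\ldots,\lambda_n)$, each power $J^k$ is upper triangular with diagonal entries $\lambda_j^k$; forming the linear combination prescribed by the coefficients of $p_i$ then shows that $p_i(J)$ is upper triangular with $j$-th diagonal entry $p_i(\lambda_j)$. In the notation of Lemma~\ref{lem:kron-ev}, this means the eigenvalue of $A_i=p_i(A)$ attached to the $j$-th diagonal position is $\lambda_{ij}=p_i(\lambda_j)$.

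With this identification the conclusion is immediate: Lemma~\ref{lem:kron-ev} applied to $C = \sum_i p_i(A)\otimes B_i = \sum_i A_i\otimes B_i$ gives, for each $j$, that the eigenvalues $\gamma_{jk}$ of $C$ in the $j$-th block are exactly the eigenvalues of $\sum_i \lambda_{ij} B_i = \sum_i p_i(\lambda_j) B_i$, which is the stated claim.

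There is no real obstacle here; the only point requiring care—more bookkeeping than difficulty—is that a single $P$ with one fixed ordering of the $\lambda_j$ along the diagonal of $J$ triangularizes all the $p_i(A)$ simultaneously, so the index $j$ is consistent across the different $i$ when forming $\sum_i p_i(\lambda_j)B_i$. Because the same $P$ and the same diagonal ordering are used for every $p_i(J)$, this consistency is automatic, and the corollary follows.
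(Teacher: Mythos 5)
Your proof is correct and follows essentially the same route as the paper: both triangularize $A$ once via a fixed $P$, observe that the same $P$ puts every $p_i(A)$ into upper triangular form with $j$-th diagonal entry $p_i(\lambda_j)$, and then invoke Lemma~\ref{lem:kron-ev}. Your version is slightly more explicit about why the diagonal of $p_i(J)$ is $(p_i(\lambda_1),\ldots,p_i(\lambda_n))$ and about the consistency of the index $j$ across the different $i$, but the argument is the same.
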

\begin{proof} The matrix $A$ is similar to an upper triangular matrix $J$ (e.g. the Jordan normal form). If $J$ is in upper triangular form, so is $J^2$ and it is clear from the spectral mapping theorem that if $J$ has eigenvalues $\lambda_i$, then $p(J)$ has eigenvalues $p(\lambda_i)$.   This implies that the set of matrices $\{p_1(A),p_2(A),\cdots, p_n(A)\}$ where $A$ is a fixed matrix and $p_i$ are polynomials can be transformed to an upper triangular form via the same matrix $P$ and the result follows from Lemma \ref{lem:kron-ev}. 
\eop
\end{proof}

It is well known that the set of simultaneously triangularizable matrices corresponds to the set of commuting matrices \cite{horn-johnson:matrix_analysis:1985}.
Examples of commuting sets of matrices include the set of $n$ by $n$ circulant matrices.  These matrices are diagonalizable with eigenvectors equal to the columns of the Discrete Fourier Transform matrix.  
As noted before, the identity matrix $I$ (and its scalar multiple) can always be added to such sets since the identity matrix commutes with any matrix.

In our study, $A_i$ will be Laplacian matrices of directed graphs\footnote{A Laplacian matrix of a directed graph is defined as $D-A$ where $D$ is the diagonal matrix of vertex outdegrees and $A$ is the adjacency matrix.} and $A_ie = 0$ for all $i$.  This means that for all vector $v$, $e\otimes v$ is an eigenvector of $A_i\otimes B_i$ corresponding to eigenvalue $0$, i.e. the eigenvalue $0$ has multiplicity at least $m$.\footnote{Recall that $m$ is the order of the matrices $B_i$.}
If in addition the graph ${\mathcal G}_i$ corresponding to $A_i$ is connected, then $0$ is an eigenvalue of $A_i$ with multiplicity $1$.  If $B_i$ is nonsingular, then $A_i\otimes B_i$ has a 0 eigenvalue of multiplicity $m$.  

We ask: when does $\sum_i A_i\otimes B_i$ have a zero eigenvalue of multiplicity $m$?  The answer to this question will be useful later on in our study of synchronization in network of systems coupled via multiple networks.  The following results give some conditions under which this question can be answered.  

\begin{theorem}
If $B$ is nonsingular, and $\sum_i A_i$ has a zero eigenvalue of multiplicity 1, then $\sum_i A_i\otimes B$ has a zero eigenvalue of multiplicity $m$.
\end{theorem}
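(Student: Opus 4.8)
The plan is to exploit the fact that a single matrix $B$ appears in every Kronecker factor, so that the bilinearity of the Kronecker product collapses the sum. Concretely, I would first write
\[
\sum_i A_i\otimes B = \left(\sum_i A_i\right)\otimes B =: A\otimes B,
\]
with $A=\sum_i A_i$. This reduction is the whole point: in this special case the multiple-network problem degenerates to a single Kronecker product, and the hypotheses are now stated directly in terms of $A$ (a simple zero eigenvalue) and $B$ (nonsingular).

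Next I would bring Lemma \ref{lem:kron-ev} to bear on the singleton family $\M=\{A\}$. A single matrix is always (simultaneously) triangularizable --- take its Schur or Jordan form --- so the hypothesis of the lemma holds trivially, and with the singleton $\{B\}$ the lemma tells us that the eigenvalues of $A\otimes B$ are exactly the eigenvalues of $\lambda_j B$, as $\lambda_j$ ranges over the eigenvalues of $A$ (indexed $j=1,\dots,n$). Crucially, the proof of that lemma produces a block upper triangular matrix $(P\otimes I)(A\otimes B)(P^{-1}\otimes I)$ whose diagonal blocks are the matrices $\lambda_j B$; hence the characteristic polynomial of $A\otimes B$ factors as $\prod_{j=1}^n \det(\mu I - \lambda_j B)$, and the eigenvalues are obtained \emph{with algebraic multiplicities} as the union, over $j$, of the spectra of the blocks $\lambda_j B$.

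Then I would simply count the zeros among these blocks. Since $B$ is nonsingular, $\lambda_j B$ is singular if and only if $\lambda_j=0$, in which case $\lambda_j B$ is the zero matrix and contributes exactly $m$ zero eigenvalues, while every block with $\lambda_j\neq 0$ is nonsingular and contributes none. Because $A$ has a zero eigenvalue of multiplicity $1$, there is precisely one index $j$ with $\lambda_j=0$. Summing the contributions over all blocks gives algebraic multiplicity exactly $m$ for the zero eigenvalue of $A\otimes B$, which is the claim.

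The argument is short, and the only place demanding care is the multiplicity bookkeeping: one must verify that the multiplicity-$1$ assumption on the zero eigenvalue of $A$ yields a single vanishing diagonal block, and that the nonsingularity of $B$ excludes any further zeros from the remaining blocks. Both points follow cleanly once the block-triangular structure supplied by Lemma \ref{lem:kron-ev} is in hand, so I do not anticipate a genuine obstacle; the value of the statement lies in reducing the multi-network question ``when does $\sum_i A_i\otimes B_i$ have a zero eigenvalue of multiplicity $m$?'' to a transparent criterion in the special case $B_i\equiv B$.
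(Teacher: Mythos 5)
Your proposal is correct and follows essentially the same route as the paper: the paper's one-line proof also implicitly collapses $\sum_i A_i\otimes B$ to $\left(\sum_i A_i\right)\otimes B$ and reads off the spectrum as the eigenvalues of $\lambda_j B$. You simply spell out the multiplicity bookkeeping that the paper leaves implicit.
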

\begin{proof} The eigenvalues of $\sum_i A_i\otimes B$  are the eigenvalues of $\lambda_iB$ where $\lambda_i$ are the eigenvalues of $A$.
\eop
\end{proof}

\begin{definition}
The {\em graph sum} of (weighted) graphs $G_1,G_2\cdots, G_r$ with the same vertex set is a graph with adjacency matrix equal to the sum of the adjacency matrices of $G_i$.
\end{definition}
For example, Fig. \ref{fig:cyclic2} shows two directed graphs with the same vertex set and edges labeled `A' and `B' resp.  Under the canonical ordering of the nodes, the Laplacian matrices of these graphs are circulant and thus commuting.  Even though each of the graph is not connected, the graph sum is a connected graph and between every pair of nodes there is a directed path between them using either `A' or `B' edges.

\begin{figure}[htbp]
\centerline{\includegraphics[width=3.5in]{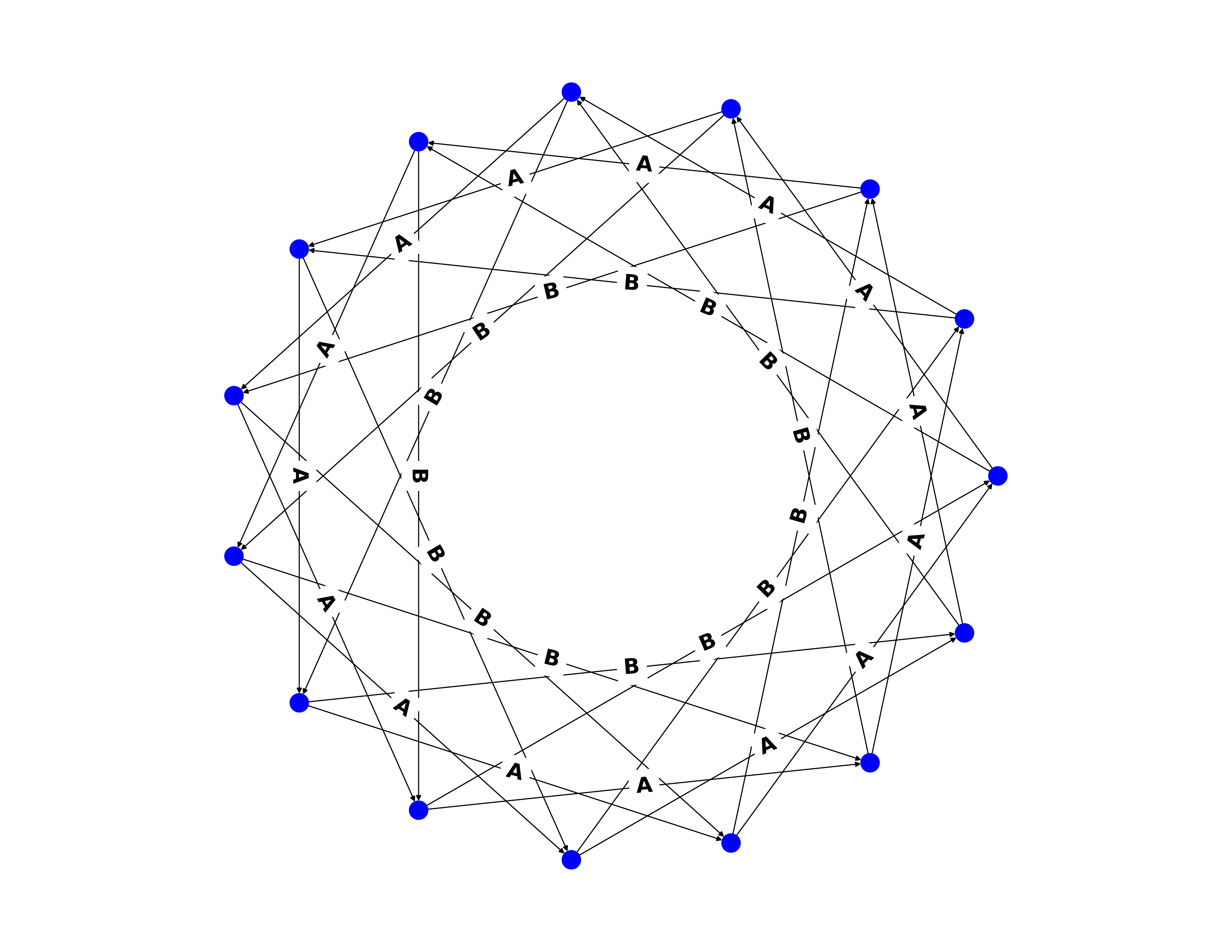}}
\caption{Two directed graphs with the same vertex set superimposed.  The edges of the two graphs are labeled `A' and `B' respectively.}

\label{fig:cyclic2}
\end{figure}

Note that the graph sum depends on the ordering of the vertices and is thus not invariant under graph isomorphism, i.e. we consider labeled graphs.
If $A_i$ are the Laplacian matrices of graphs, then $\sum_i A_i$ is the Laplacian matrix of the graph sum of these graphs and $\sum_i A_i$ having a zero eigenvalue of multiplicity 1 is equivalent to the reversal\footnote{i.e. the directed graph obtained by reversing the orientation of all edges.} of the graph sum containing a spanning directed tree \cite{wu:reducible:2005}.  Note that for directed graphs, the Laplacian matrix is not necessarily symmetric.

\begin{lemma} \label{lem:graph1}
Let $L_i$ be the Laplacian matrices of graphs ${\mathcal G}_i$ with the same vertex set such that the graph sum is a directed graph whose reversal contains a spanning directed tree.  If $L_i$ are simultaneously triangularizable with eigenvalues $\lambda_{ij}$, then for exactly one fixed $j$ is it true that $\lambda_{ij} = 0$ for all $i$.  In fact, the index $j$ corresponds to the zero eigenvector $e$.
\end{lemma}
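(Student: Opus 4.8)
The plan is to reduce everything to the single matrix $\sum_i L_i$ and to exploit two facts at once: that $e$ is a common null vector of the whole family $\{L_i\}$, and that the graph-sum hypothesis pins the zero eigenvalue of $\sum_i L_i$ down to being simple. First I would record that each $L_i$ is a directed-graph Laplacian, so its rows sum to zero and $L_i e = 0$; thus $e$ is a common eigenvector of $\{L_i\}$ with eigenvalue $0$. Next, fixing a common triangularizer $P$ with $P L_i P^{-1} = J_i$ upper triangular and diagonal entries $\lambda_{ij}$, I would apply $P$ to the sum to get $P\left(\sum_i L_i\right)P^{-1} = \sum_i J_i$, which is again upper triangular with diagonal entries $\mu_j := \sum_i \lambda_{ij}$. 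Hence the eigenvalues of the graph-sum Laplacian $\sum_i L_i$ are exactly the numbers $\mu_j$, read off the diagonal; this is the same bookkeeping used in Lemma~\ref{lem:kron-ev}.

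With this setup the uniqueness is immediate. By the equivalence recalled just before the lemma (see \cite{wu:reducible:2005}), the hypothesis that the reversal of the graph sum contains a spanning directed tree is exactly the statement that $\sum_i L_i$ has a zero eigenvalue of algebraic multiplicity $1$. Since for a triangular matrix the algebraic multiplicity of $0$ equals the number of vanishing diagonal entries, exactly one index, say $j_0$, satisfies $\mu_{j_0}=0$, while $\mu_j\neq 0$ for $j\neq j_0$. Now if some index $j$ had $\lambda_{ij}=0$ for every $i$, then $\mu_j=\sum_i\lambda_{ij}=0$, forcing $j=j_0$; so there is at most one index at which all the $\lambda_{ij}$ vanish.

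For existence, and to identify the index with $e$, I would push the common null vector through $P$. Set $v:=Pe$, which is nonzero because $P$ is nonsingular, and note $J_i v = P L_i P^{-1} P e = P L_i e = 0$, so $v$ lies in the kernel of every $J_i$. Let $k$ be the largest index with $v_k\neq 0$. Reading off the $k$-th entry of $J_i v = 0$ and using upper-triangularity (all later coordinates of $v$ vanish) leaves only the diagonal term, i.e. $\lambda_{ik}v_k=0$; since $v_k\neq 0$ this gives $\lambda_{ik}=0$ for all $i$. Thus such an index exists, and by the uniqueness above it must be $j_0$; moreover it is precisely the diagonal position singled out by the common zero eigenvector $e$ through $v=Pe$, which is the asserted identification. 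In the simultaneously diagonalizable special case this last step degenerates to the obvious remark that $e$ is itself one of the shared eigenvectors $v_j$ and has eigenvalue $0$ for every $L_i$.

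I expect the only genuinely delicate point to be this existence-and-identification step in the general triangularizable (non-diagonalizable) regime: one cannot simply assert that $e$ is one of the eigenvectors $v_j$ as in the diagonalizable footnote, and the clean substitute is the last-nonzero-coordinate argument applied to $v=Pe$ in the common kernel of the $J_i$. Everything else is routine once the eigenvalues of $\sum_i L_i$ are recognized as the diagonal sums $\mu_j$.
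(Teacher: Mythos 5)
Your proof is correct and follows essentially the same route as the paper's: pass to the graph-sum Laplacian $\sum_i L_i$, read its eigenvalues off the common triangularization as $\sum_i \lambda_{ij}$, and invoke the equivalence from \cite{wu:reducible:2005} to get a simple zero eigenvalue. The only difference is that the paper compresses the existence-and-identification step into ``the conclusion follows,'' whereas you supply the missing detail explicitly via the last-nonzero-coordinate argument applied to $v = Pe$ in the common kernel of the $J_i$ --- a worthwhile addition, since in the merely triangularizable case that step is not entirely automatic.
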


\begin{proof}
The Laplacian matrix of the graph sum is $\sum_i L_i$.  By the simultaneously triangularizable property, its eigenvalues are $\sum_i \lambda_{ij}$ for each $j$.
By \cite{wu:reducible:2005}, its Laplacian matrix has a $0$ eigenvalue of multiplicity $1$ and the conclusion follows.\eop
\end{proof}

\begin{theorem} \label{thm:spanning_tree}
Let $\{L_i\}$ be a set of $n$ by $n$ simultaneously triangularizable Laplacian matrices, with corresponding eigenvalues $\lambda_{ij}$ that are all nonnegative.  Let $\{B_i\}$ be a set of $m$ by $m$ nonsingular matrices such that any convex combination of $B_i$ is nonsingular.  If the graph sum of the graph corresponding to $\{L_i\}$ is a directed graph whose reversal contains a spanning directed tree,  then the $nm$ by $nm$ matrix
$C = \sum_i L_i\otimes B_i$ has a zero eigenvalue of multiplicity $m$. 
\end{theorem}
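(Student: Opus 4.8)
The plan is to reduce the count of zero eigenvalues of the large Kronecker sum to a count over the $n$ diagonal blocks produced by simultaneous triangularization. First I would invoke Lemma \ref{lem:kron-ev}: since $\{L_i\}$ is simultaneously triangularizable with aligned eigenvalues $\lambda_{ij}$, the eigenvalues of $C = \sum_i L_i \otimes B_i$ are exactly the eigenvalues of the $m \times m$ blocks $M_j := \sum_i \lambda_{ij} B_i$, collected over $j = 1, \ldots, n$. Concretely, conjugating $C$ by $P \otimes I$ yields a block upper triangular matrix whose diagonal blocks are the $M_j$, so the characteristic polynomial of $C$ factors as $\prod_j \det(M_j - \mu I)$. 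Hence the algebraic multiplicity of the eigenvalue $0$ in $C$ equals the sum over $j$ of the multiplicities of $0$ in each $M_j$, and the whole theorem reduces to analyzing these blocks one at a time.

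Next I would use Lemma \ref{lem:graph1} to isolate the single singular block. Because the reversal of the graph sum contains a spanning directed tree, Lemma \ref{lem:graph1} guarantees that there is exactly one index $j_0$, namely the one associated with the eigenvector $e$, for which $\lambda_{i j_0} = 0$ for all $i$. For this index, $M_{j_0} = \sum_i \lambda_{i j_0} B_i$ is the $m \times m$ zero matrix and therefore contributes exactly $m$ to the multiplicity of the zero eigenvalue of $C$.

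The crux is to show that every other block $M_j$ with $j \neq j_0$ is nonsingular, so that these blocks contribute nothing. Fix $j \neq j_0$. By Lemma \ref{lem:graph1} the scalars $\lambda_{ij}$ are not all zero, and by hypothesis they are all nonnegative; hence $s_j := \sum_i \lambda_{ij} > 0$. I would then write $M_j = s_j \sum_i (\lambda_{ij}/s_j) B_i$, where the coefficients $\lambda_{ij}/s_j$ are nonnegative and sum to $1$, so the inner sum is a convex combination of the $B_i$. By the standing hypothesis that every convex combination of the $B_i$ is nonsingular, this inner sum is nonsingular, and scaling by the positive number $s_j$ preserves nonsingularity. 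Thus $0$ is not an eigenvalue of $M_j$.

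Combining the three steps, the total algebraic multiplicity of the zero eigenvalue of $C$ is $m$ from $M_{j_0}$ plus $0$ from each remaining block, i.e. exactly $m$. The one delicate point worth emphasizing is the normalization in the third step: the convex-combination hypothesis on the $B_i$ is genuinely needed and cannot be relaxed to nonsingularity of each individual $B_i$, since a nontrivial nonnegative combination of nonsingular matrices can be singular. It is precisely the positivity of $s_j$, itself a consequence of the nonnegativity of the $\lambda_{ij}$ together with the spanning-tree condition, that lets me rescale each off-diagonal block into a convex combination and apply the hypothesis. I expect this normalization to be the only nonroutine move in the argument.
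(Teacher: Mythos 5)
Your proposal is correct and follows essentially the same route as the paper's proof: reduce to the diagonal blocks $\sum_i \lambda_{ij} B_i$ via Lemma \ref{lem:kron-ev}, isolate the unique all-zero index via Lemma \ref{lem:graph1}, and rescale every other block into a convex combination of the $B_i$ using nonnegativity of the $\lambda_{ij}$. Your explicit normalization by $s_j = \sum_i \lambda_{ij} > 0$ just spells out the rescaling step that the paper states in one line.
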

\begin{proof} By Lemma \ref{lem:kron-ev},
the eigenvalues of $C$ are given by the eigenvalues $\sum_i \lambda_{ij}B_i$ for each $j$. By Lemma \ref{lem:graph1} only for one $j$ is it true that $\lambda_{ij} = 0$ for all $i$, in which case $\sum_i \lambda_{ij}B_i = 0$ and we have $m$ zero eigenvalues.  For all other $j$ there exists nonzero $\lambda_{ij}$'s.  Since $\lambda_{ij}\geq 0$, this can be rescaled to a convex combination of $B_i$ which is nonsingular by hypothesis. \eop
\end{proof}

Examples of nonsingular matrices $B_i$ for which convex combinations are nonsingular include diagonal positive definite matrices or strictly diagonally dominant matrices whose eigenvalues have positive real parts.

\section{Continuous-time systems} \label{sec:continuous}
For the continuous-time case, consider the state equation:

\begin{equation} \dot{x} = \left(\begin{array}{c} f(x_1,t)\\ \vdots \\ f(x_n,t)\end{array}
\right) - \sum_{k=1}^r(G_k(t)\otimes D_k(t)) x
\label{eqn:main}
\end{equation}
where $x = (x_1,\dots,x_n)$ and $G_k(t)$ and $D_k(t)$ are square matrices for each $k$ and $t$.   There are $n$ {\em identical} systems described  by $\dot{x_i} = f(x_i,t)$ and each $x_i$ is the $m$-dimensional state vector of the $i$-th system.  The coupling matrices $G_k(t)$ have zero row sums and describe
$r$ different coupling networks coupling these systems together, i.e. the $ij$-th entry of the matrix $G_k(t)$ is nonzero if there is a
coupling from the $j$-th system to the $i$-th system via the $k$-th network.  The
matrix $D_k(t)$ describes the coupling between a pair of systems in the $k$-th network and is the same between any pair of systems in the $k$-th network. We are interested in the case where the $D_k$ are not all equal, as otherwise it reverts to the single network case with the network being the graph sum.
We say the system is {\em globally synchronizing} if $\|x_i-x_j\|\rightarrow 0$
as $t\rightarrow\infty$ for all $i$,$j$ and all initial conditions.  
  One important special case is when $D_k = E_k$, which corresponds to the case where the $k$-th network only couples the $k$-th component of the state vectors and the internetwork connections occur solely due to the dynamics of the individual systems.

\begin{definition}
$\cal W$ is the class of symmetric matrices that have zero row sums and
nonpositive off-diagonal elements.
\end{definition}

The following result provides sufficient conditions under which the system in Eq. (\ref{eqn:main}) is globally synchronizing.
and follows directly from the synchronization theorems in \cite{wu:chua_synch_arrays_95,wu:perturbation:2003,wu:synch-directed:2005}.   
\begin{theorem}\label{thm:synch}
Let $P(t)$ be a time-varying matrix and $V\succ 0$ be a symmetric matrix such that $(y-z)^TV(f(y,t)-P(t)y-f(z,t)+P(t)z) \leq -c\|y-z\|^2$ for some
$c>0$.  Then the network in Eq. (\ref{eqn:main}) is globally synchronizing if there exists an irreducible matrix $U\in \cal W$
such that for all $t$
\begin{equation} (U\otimes V)\left(\sum_{k=1}^r G_k(t)\otimes D_k(t) - I \otimes P(t)\right) \succeq 0 \label{eqn:cond} \end{equation}
\end{theorem}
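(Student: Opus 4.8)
The plan is to build a quadratic Lyapunov function adapted to the weight matrix $U$ and show that it decays exponentially to zero, mirroring the single-network argument of the cited references but treating the combined coupling operator $\sum_{k=1}^r G_k(t)\otimes D_k(t)$ in place of a single $G(t)\otimes D(t)$. First I would record the structural facts about $U$: since $U\in{\cal W}$ is symmetric with zero row sums and nonpositive off-diagonal entries, it is the Laplacian of a weighted undirected graph, hence positive semidefinite, and irreducibility forces its kernel to be exactly $\mathrm{span}\{e\}$. Writing $U_{ij}=-w_{ij}$ for $i\neq j$ with $w_{ij}=w_{ji}\geq 0$, the key algebraic identity
$$x^T(U\otimes V)x=\sum_{i<j}w_{ij}(x_i-x_j)^TV(x_i-x_j)$$
shows that, because $V\succ 0$, the quantity $W(t):=\half\,x^T(U\otimes V)x$ is nonnegative and vanishes precisely on the synchronization manifold $\{e\otimes s\}$. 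This makes $W$ a legitimate measure of synchronization error.

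Next I would differentiate along trajectories. Since $U\otimes V$ is symmetric, $\dot W = x^T(U\otimes V)\dot x$, and substituting Eq. (\ref{eqn:main}) I would add and subtract $(I\otimes P(t))x$ to split $\dot W$ into an \emph{intrinsic} part $x^T(U\otimes V)(F(x,t)-(I\otimes P)x)$, where $F$ stacks the $f(x_i,t)$, and a \emph{coupling} part $-\,x^T(U\otimes V)\big(\sum_k G_k\otimes D_k-I\otimes P\big)x$. For the coupling part the hypothesis (\ref{eqn:cond}) gives exactly nonnegativity of the quadratic form (using the convention that $M\succeq 0$ means $x^TMx\geq 0$ for all real $x$, i.e. $M+M^T\succeq 0$), so this term is $\leq 0$. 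For the intrinsic part I would apply the same Laplacian identity to rewrite it as $\sum_{i<j}w_{ij}(x_i-x_j)^TV\big(f(x_i,t)-Px_i-f(x_j,t)+Px_j\big)$ and then invoke the dissipativity hypothesis on $f$ with $y=x_i$, $z=x_j$ to bound each summand by $-c\|x_i-x_j\|^2$.

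Combining the two parts yields $\dot W\leq -c\sum_{i<j}w_{ij}\|x_i-x_j\|^2$. Bounding $W\leq \half\lambda_{\max}(V)\sum_{i<j}w_{ij}\|x_i-x_j\|^2$ turns this into $\dot W\leq -\kappa W$ with $\kappa=2c/\lambda_{\max}(V)>0$, a constant independent of $t$ because $U$, $V$ and $c$ are fixed; Gr\"onwall's inequality then gives $W(t)\leq W(0)e^{-\kappa t}\to 0$. Finally, since $W\geq \half\lambda_{\min}(V)\sum_{i<j}w_{ij}\|x_i-x_j\|^2$ and $U$ is irreducible, $W\to 0$ forces $\|x_i-x_j\|\to 0$ for every pair, i.e. global synchronization.

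I expect the main obstacle to be bookkeeping the non-symmetry rather than any deep difficulty: because the $G_k$ are Laplacians of \emph{directed} graphs the matrix in (\ref{eqn:cond}) need not be symmetric, so I must be careful to use the real-quadratic-form interpretation of $\succeq 0$ throughout, and I must verify that the two algebraic uses of the ${\cal W}$-Laplacian identity (once for the energy, once for the intrinsic term) are precisely what convert the coordinate-free hypotheses into the pairwise estimates that drive the decay. The irreducibility of $U$ is the other point that must not be dropped, since it is what upgrades ``partial'' decay into decay of every $\|x_i-x_j\|$.
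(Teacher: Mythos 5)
Your proposal is correct and follows essentially the same route the paper relies on: the paper gives no in-line proof but states that the theorem ``follows directly from'' the synchronization theorems of the cited references, and those arguments are built on exactly the Lyapunov function $\half x^T(U\otimes V)x$ with the Laplacian pairwise identity for $U\in{\cal W}$, the add-and-subtract of $I\otimes P(t)$, and irreducibility to pin the kernel of $U$ to $\mathrm{span}\{e\}$. Your write-up correctly supplies the details the paper delegates, including the real-quadratic-form reading of $\succeq 0$ for the non-symmetric coupling operator.
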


We next look at various special cases of Theorem \ref{thm:synch}.

\subsection{The case $P = \xi I$}
Note that the matrix $P(t)$ can be thought of as a stabilizing (time-varying) negative linear state feedback such that $\dot{x}=f(x,t)-P(t)x$ is asymptotically stable.
Using Theorems \ref{thm:spanning_tree},  \ref{thm:synch} and choosing $P = \xi I$, we get:
\begin{corollary} \label{cor:spanning_tree_synch}
Let $\{L_i\}$ be a set of commuting normal Laplacian matrices where the graph sum's reversal contains a spanning directed tree. If $D_k$ are normal, $f$ is Lipschitz continuous with Lipschitz constant $c$, $G_k = \xi L_k$ and matrices in conv($D_k$) have eigenvalues with positive real part, then Eq. (\ref{eqn:main}) is synchronizing if $\xi\geq \frac{c}{Re(\lambda_L)Re(\lambda_D)}$ where conv($D_k$) is the convex hull of $D_k$, $\lambda_L$ is the nonzero eigenvalue of $L_i$ with minimal real part and $\lambda_D$ is the eigenvalue of matrices in conv($D_k$) with minimal real part. 
\end{corollary}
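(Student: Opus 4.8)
The plan is to apply the general synchronization criterion of Theorem~\ref{thm:synch}, taking $V=I$ and a scalar feedback $P=\beta I$, and to show that for $\beta$ just above $c$ both of its hypotheses reduce to spectral statements that follow from Lemmas~\ref{lem:kron-ev} and~\ref{lem:graph1}. First I would verify the dissipativity hypothesis: Lipschitz continuity of $f$ gives $(y-z)^T(f(y,t)-f(z,t))\le c\|y-z\|^2$ by Cauchy--Schwarz, so with $V=I$ and $P=\beta I$ the expression $(y-z)^T(f(y,t)-Py-f(z,t)+Pz)$ is at most $(c-\beta)\|y-z\|^2$, which is strictly negative whenever $\beta>c$. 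Thus the first hypothesis holds for any $\beta>c$, and the whole problem is pushed onto the matrix inequality~(\ref{eqn:cond}).

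Next I would diagonalize the coupling. Since the $L_k$ are normal and commute they are simultaneously unitarily diagonalizable in a common orthonormal basis $v_1=e/\sqrt n,v_2,\dots,v_n$, and by Lemma~\ref{lem:kron-ev} the operator $\sum_k L_k\otimes D_k$ splits into the $m\times m$ blocks $M_j=\sum_k\lambda_{kj}D_k$. By Lemma~\ref{lem:graph1} the spanning-tree hypothesis forces exactly one index, the one attached to $e$, to have $\lambda_{kj}=0$ for every $k$; that block is identically zero and accounts for the $m$-dimensional synchronization manifold, while every transverse $j$ has some $\lambda_{kj}\neq0$. For a transverse index write $\Lambda_j=\sum_k\lambda_{kj}$ and factor $M_j=\Lambda_j\sum_k(\lambda_{kj}/\Lambda_j)D_k$, exhibiting $M_j$ as $\Lambda_j$ times a convex combination of the $D_k$; the latter lies in $\text{conv}(D_k)$ and so has all eigenvalues with real part at least $Re(\lambda_D)$, while $Re(\Lambda_j)\ge Re(\lambda_L)$ because $\Lambda_j$ is a sum of Laplacian eigenvalues of nonnegative real part, at least one of which is nonzero and hence has real part at least $Re(\lambda_L)$. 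Multiplying these bounds shows every eigenvalue of $\xi M_j$ has real part at least $\xi\,Re(\lambda_L)Re(\lambda_D)$, so choosing an irreducible $U\in\mathcal{W}$ that lies in the commutant of the $L_k$ decouples~(\ref{eqn:cond}) mode by mode into $\xi\,Re(M_j)\succeq\beta I$, which holds precisely when $\xi\,Re(\lambda_L)Re(\lambda_D)\ge\beta$; letting $\beta\downarrow c$ yields the stated threshold.

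The main obstacle is the transverse eigenvalue bound in the genuinely directed case, where the $\lambda_{kj}$, and therefore $\Lambda_j$, are complex. Then the factorization $M_j=\Lambda_j(\text{convex combination})$ uses complex weights $\lambda_{kj}/\Lambda_j$, the convex-hull hypothesis no longer constrains $\text{spec}(M_j)$ directly, and the clean identity $Re(\text{eig}(\Lambda_j\bar D))=Re(\Lambda_j)\,Re(\text{eig}(\bar D))$ breaks down because multiplication by a complex $\Lambda_j$ rotates eigenvalues across the imaginary axis. Pushing the argument through here is where normality of the $D_k$ must be used: for normal $D_k$ the numerical range $W(D_k)$ equals $\text{conv}(\text{spec}(D_k))$, and subadditivity of the numerical range gives $W(M_j)\subseteq\sum_k\lambda_{kj}W(D_k)$, from which one must extract the lower bound on $Re\,W(M_j)$ while controlling the rotations induced by the arguments of the $\lambda_{kj}$. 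A secondary technical point, needed to turn the per-block inequalities back into~(\ref{eqn:cond}), is producing an irreducible $U\in\mathcal{W}$ that block-diagonalizes simultaneously with the real invariant subspaces of the normal commuting $L_k$; in the undirected (symmetric $L_k$) case one can simply take $U=\sum_k L_k$, but the directed case requires a commutant argument.
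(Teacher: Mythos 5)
Your overall strategy is the same as the paper's: invoke Theorem~\ref{thm:synch} with $V=I$ and a scalar feedback $P$, use the common orthonormal eigenbasis of the commuting normal $L_k$ (via Lemma~\ref{lem:kron-ev}) to split $\sum_k G_k\otimes D_k$ into the $m\times m$ blocks $\xi\sum_k\lambda_{kj}D_k$, and use Lemma~\ref{lem:graph1} to identify the single vanishing block carrying the eigenvectors $e\otimes v$ of the synchronization manifold. That matches the paper's (very terse) justification of the corollary.

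However, the proposal is not a complete proof: you explicitly leave open the one step that carries all the content, namely the lower bound on the transverse blocks when the $\lambda_{kj}$ are complex. Your factorization $M_j=\Lambda_j\sum_k(\lambda_{kj}/\Lambda_j)D_k$ exhibits a convex combination only when all $\lambda_{kj}$ are real and nonnegative; for genuinely directed graphs the weights are complex, the hypothesis on $\mbox{conv}(D_k)$ then says nothing about $M_j$, and since $Re(\Lambda_j\mu)=Re(\Lambda_j)Re(\mu)-Im(\Lambda_j)Im(\mu)$ the step ``multiplying these bounds'' fails. You correctly name this obstacle and gesture at numerical ranges, but you do not carry the argument through, so the stated threshold $\xi\geq c/(Re(\lambda_L)Re(\lambda_D))$ is not actually established by your argument. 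Two further points are glossed over: (i) condition~(\ref{eqn:cond}) is a positive-semidefiniteness statement about $M_j+M_j^H$, not about the spectrum of $M_j$; the $D_k$ are each normal but need not commute with one another, so $M_j$ need not be normal and ``all eigenvalues of $M_j$ have real part at least $\beta/\xi$'' does not by itself give $M_j+M_j^H\succeq (2\beta/\xi) I$; (ii) the irreducible $U\in{\cal W}$ you defer to a ``commutant argument'' can simply be taken to be $L_K=nI-ee^T$, which commutes with every normal zero-row-sum matrix because normality forces the column sums to vanish as well --- this should be stated rather than left open. As written, the proposal identifies the correct skeleton and the correct difficulties but resolves neither of the essential ones.
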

This result can be shown by noting that $e\otimes v$ are exactly the eigenvectors corresponding to zero eigenvalues of both $U\otimes V$ and $\sum_{k=1}^r G_k(t)\otimes D_k(t)$. Note that by hypothesis $Re(\lambda_D) > 0$ as the convex hull is compact.
Corollary \ref{cor:spanning_tree_synch} is intuitive in the sense that if the graph sum's reversal contains a spanning directed tree, there is at least one node that influences every other node via edges where each edge is in at least one of the networks. Even though each of the network might not be connected by themselves, synchronization can occur via the graph sum if the coupling is strong enough.

The converse is true as well in the following sense. If $G_k$ and $D_k$ do not depend on $t$ and the graph sum does not contain a spanning directed tree, then there exists two nodes in the graph sum that do not influence each other and for general dynamical systems, especially chaotic systems, it is not possible for the systems at these 2 nodes to synchronize to each other.

\subsection{$\dot{x} = f(x,t) - \xi\sum_k D_k(t)$ is asymptotically stable}
Consider the case where $\dot{x} = f(x,t) - \xi\sum_k D_k(t)$ is asymptotically stable for some $\xi > 0$.  In this case, we can choose $P(t)=\xi\sum_k D_k(t)$.  Some examples are when $P(t)$ is a diagonal matrix\footnote{In several prior studies $P$ is chosen to be the identity matrix.} for all $t$ and $D_i(t)=c_i(t)E_i$.  
In this case Eq. (\ref{eqn:cond}) simplifies to
$ (U\otimes V)\left(\sum_{k=1}^r (G_k(t)-\xi I)\otimes D_k(t)\right) \succeq 0 $.
If in addition $VD_k\succeq 0$, such as when $V$ is a diagonal matrix and positive definite and $D_k$ are diagonal and positive semidefinite matrices, then Eq. (\ref{eqn:cond}) is satisfied if
$ U(G_k(t)-\xi I)\succeq 0 $
for all $k=1,\cdots,r$. 

Thus we have shown the following:
\begin{theorem}\label{thm:synch2}
Let $V\succ 0$ be some symmetric matrix such that $(y-z)^TV(f(y,t)-\xi\sum_k D_k y-f(z,t)+\xi\sum_k D_k z) \leq -c\|y-z\|^2$ for some
$c,\xi >0$.  Suppose that $VD_k\succeq 0$ for all $k$. Then the network in Eq. (\ref{eqn:main}) is globally synchronizing if there exists an irreducible matrix $U\in \cal W$
such that 
$U(G_k(t)-\xi I)\succeq 0 $
for all $t$ and all $k=1,\cdots,r$.
\end{theorem}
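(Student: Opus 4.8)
The plan is to obtain Theorem~\ref{thm:synch2} as a specialization of Theorem~\ref{thm:synch} under the feedback choice $P(t)=\xi\sum_k D_k(t)$, reducing the block semidefiniteness condition (\ref{eqn:cond}) to the per-network conditions $U(G_k(t)-\xi I)\succeq 0$ by Kronecker-product algebra. With $P(t)=\xi\sum_k D_k(t)$, the hypothesis of Theorem~\ref{thm:synch} on the pair $(f,V)$, namely $(y-z)^TV(f(y,t)-P(t)y-f(z,t)+P(t)z)\leq -c\|y-z\|^2$, becomes \emph{identical} to the contraction inequality assumed here. Thus the nonlinear/dynamical part of the argument transfers verbatim, and the only thing left to verify is condition (\ref{eqn:cond}) for this $P$.

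For the algebraic simplification, I would substitute $P$ and use $I\otimes\sum_k D_k=\sum_k I\otimes D_k$ to rewrite the bracketed term as $\sum_k G_k\otimes D_k-\xi\sum_k I\otimes D_k=\sum_k (G_k-\xi I)\otimes D_k$. Applying the mixed-product rule $(U\otimes V)\bigl((G_k-\xi I)\otimes D_k\bigr)=\bigl(U(G_k-\xi I)\bigr)\otimes(VD_k)$ term by term then gives
$$(U\otimes V)\left(\sum_k (G_k-\xi I)\otimes D_k\right)=\sum_k \bigl(U(G_k-\xi I)\bigr)\otimes(VD_k).$$

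The central step is to show this sum is positive semidefinite. Since positive semidefiniteness of a real matrix depends only on its symmetric part and $VD_k$ is symmetric, the symmetric part of each summand factors as $\frac12\bigl(U(G_k-\xi I)+(U(G_k-\xi I))^T\bigr)\otimes(VD_k)$. The hypothesis $U(G_k-\xi I)\succeq 0$ says the first factor is PSD and $VD_k\succeq 0$ gives the second; the Kronecker product of two symmetric PSD matrices is PSD, and a sum of PSD matrices is PSD. Hence (\ref{eqn:cond}) holds for all $t$, and Theorem~\ref{thm:synch} yields global synchronization.

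I expect the third step to be the main obstacle, specifically the role played by the symmetry of $VD_k$. The factorization of the symmetric part of $\bigl(U(G_k-\xi I)\bigr)\otimes(VD_k)$ relies on $VD_k=(VD_k)^T$; the tempting general claim that a Kronecker product of two matrices with PSD symmetric parts again has PSD symmetric part is \emph{false}, because of the cross terms $X^T\otimes Y^T$. I would therefore read the hypothesis $VD_k\succeq 0$ as asserting that $VD_k$ is symmetric PSD (as in the example of $V$ diagonal positive definite and $D_k$ diagonal positive semidefinite), which removes the offending cross terms and makes each summand, and hence the full sum, genuinely positive semidefinite.
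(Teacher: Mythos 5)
Your proof follows the paper's own derivation exactly: the paper obtains Theorem~\ref{thm:synch2} by choosing $P(t)=\xi\sum_k D_k(t)$ in Theorem~\ref{thm:synch}, rewriting condition (\ref{eqn:cond}) as $(U\otimes V)\left(\sum_{k}(G_k(t)-\xi I)\otimes D_k(t)\right)\succeq 0$, and observing that this holds term by term when $U(G_k(t)-\xi I)\succeq 0$ and $VD_k\succeq 0$. Your closing observation---that under the paper's convention of PSD-ness for non-symmetric matrices the Kronecker-product step genuinely requires $VD_k$ to be symmetric, which is consistent with the paper's illustrative case of diagonal $V$ and $D_k$---is a real subtlety the paper passes over silently, and handling it the way you do is the right call.
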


For $U=L_K$, the Laplacian matrix of the complete graph,
the proof of Theorem 3 in \cite{wu:perturbation:2003} can be used to show the following:
\begin{theorem}\label{thm:synch2b}
Let $V\succ 0$ be some symmetric matrix such that $(y-z)^TV(f(y,t)-\xi\sum_k D_k y-f(z,t)+\xi\sum_k D_k z) \leq -c\|y-z\|^2$ for some
$c, \xi >0$.  Suppose that $VD_k\succeq 0$ and $G_k(t)$ has zero row sums and zero column sums such that $G_k(t)+G_k^T(t)$ has a simple zero eigenvalue. Then the network in Eq. (\ref{eqn:main}) is globally synchronizing if 
$\lambda_2\left(\frac{1}{2}\left(G_k(t)+G^T_k(t)\right)\right) \geq \xi$ for all $t$ and all $k = 1,\cdots , r$.
\end{theorem}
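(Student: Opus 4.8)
The plan is to obtain Theorem \ref{thm:synch2b} as the specialization of Theorem \ref{thm:synch2} to the choice $U = L_K = nI - ee^T$, the Laplacian of the complete graph, and then to translate the matrix inequality $U(G_k(t)-\xi I)\succeq 0$ into the scalar condition on $\frac{1}{2}(G_k+G_k^T)$. First I would check that $L_K$ is admissible: it is symmetric, has zero row sums, off-diagonal entries all equal to $-1\le 0$, and is irreducible since the complete graph is connected, so $L_K\in\cal W$ is irreducible. Because $VD_k\succeq 0$ is assumed, Theorem \ref{thm:synch2} then delivers global synchronization the moment its remaining hypothesis $L_K(G_k(t)-\xi I)\succeq 0$ is verified; this is the complete-graph case whose single-network analogue is Theorem 3 of \cite{wu:perturbation:2003}.

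The heart of the argument is computing the symmetric part of $L_K(G_k-\xi I)$. Writing $S_k=\frac{1}{2}(G_k+G_k^T)$ and using $L_K^T=L_K$, the zero-column-sum hypothesis $e^TG_k=0$ (equivalently $G_k^Te=0$) collapses the rank-one corrections in $L_KG_k = nG_k - e(e^TG_k)$ and $G_k^TL_K = nG_k^T-(G_k^Te)e^T$, so that $\frac{1}{2}(L_KG_k+G_k^TL_K)=nS_k$. Substituting $L_K=nI-ee^T$ then yields the target identity
\[
\frac{1}{2}\left(L_K(G_k-\xi I)+(G_k-\xi I)^TL_K\right) = nS_k - n\xi I + \xi ee^T .
\]
I would then diagonalize this matrix along the orthogonal splitting $\R^n=\mathrm{span}(e)\oplus e^\perp$, which is invariant because $S_ke=0$ (zero row and column sums) while $ee^T$ and $I$ preserve the splitting. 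On $\mathrm{span}(e)$ the matrix annihilates $e$, since $S_ke=0$ and the $-n\xi I$ and $\xi ee^T$ terms cancel exactly, matching the synchronization manifold $\{e\otimes v\}$. On $e^\perp$ the rank-one term drops out and the restriction is $n(S_k-\xi I)$, with eigenvalues $n(\lambda_j(S_k)-\xi)$ for $j\ge 2$. Since $G_k+G_k^T$ has a simple zero eigenvalue, $\lambda_1(S_k)=0$ is attained only along $e$, and the restriction is positive semidefinite exactly when $\lambda_2(S_k)\ge\xi$, which is the stated hypothesis; positive semidefiniteness of the full symmetric part follows.

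The step I expect to require the most care is ensuring the hypotheses of Theorem \ref{thm:synch2} are genuinely met for this $U$. In particular, one must confirm that $VD_k\succeq 0$ together with $L_K(G_k-\xi I)\succeq 0$ really does give $\sum_k\bigl(L_K(G_k-\xi I)\bigr)\otimes(VD_k)\succeq 0$, which is the nontrivial Kronecker-product fact underlying Theorem \ref{thm:synch2} (a quadratic-form argument decomposing each $VD_k$ spectrally and invoking $M_k+M_k^T\succeq 0$ for $M_k=L_K(G_k-\xi I)$). Equally, I would verify that the simple-zero-eigenvalue assumption is exactly what forces $\lambda_1(S_k)=0$ to be the minimal eigenvalue, so that $\lambda_2(S_k)\ge\xi>0$ legitimately certifies $S_k-\xi I\succeq 0$ on $e^\perp$ rather than masking a spurious negative eigenvalue beneath the zero mode.
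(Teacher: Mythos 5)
Your proposal is correct and follows essentially the same route as the paper: specialize to $U=L_K$, use the zero-column-sum hypothesis to reduce $L_KG_k$ to $nG_k$, and check positive semidefiniteness of the symmetric part separately on $\mathrm{span}(e)$ and on $e^\perp$, where the condition $\lambda_2\bigl(\tfrac{1}{2}(G_k+G_k^T)\bigr)\geq\xi$ is exactly what is needed. If anything, your version is slightly cleaner in that it carries the general $\xi$ through the computation, whereas the paper's displayed proof tacitly works with $\xi=1$.
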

\proof $U = L_K = nI-J$, where $J$ is the $n$ by $n$ matrix of all $1$'s.  Then $U(G_k(t)-I) = nG_k(t)-L_K$. Now $nG_k(t)-L_K \geq 0$ if all eigenvalues of
$\frac{1}{2}n\left(G_k(t)+G^T_k(t)\right)-L_K$ are nonnegative.
For the vector $e$, $\left(\frac{1}{2}n\left(G_k(t)+G^T_k(t)\right)-L_K\right)e = 0$.  For a vector $x$ orthogonal to $e$,
$L_K x = nx$ and $x^T\left(\frac{1}{2}\left(G_k(t)+G^T_k(t)\right)\right)x\geq x^Tx$
and thus the proof is complete.\eop

The interpretation here is that even though the coupling matrix $D_k$ corresponding to an individual network $G_k$ is not sufficient to drive $f$ to stability, the combination of all $D_k$ is sufficient stabilizing feedback.  The consequence is that even though a single coupled network $G_k$ with coupling matrix $D_k$ is not sufficient to achieve synchronization, the combination of multiple networks is.

Next, define $\Xi$ as the set of real numbers $\xi$ such that there exists an irreducible matrix $U\in {\cal W}$ satisfying
$U(G_k(t)-\xi I)\succeq 0$ for all $k$ and $t$.  Let $\xi_M$ be the supremum of $\Xi$.

\begin{corollary}\label{cor:synch2}
Let $V\succ 0$ be some symmetric matrix such that $(y-z)^TV(f(y,t)-\xi_M\sum_k D_k y-f(z,t)+\xi_M\sum_k D_k z) \leq -c\|y-z\|^2$ for some
$c>0$ and that $VD_k\succeq 0$ for all $k$. Then Eq. (\ref{eqn:main}) is globally synchronizing.
\end{corollary}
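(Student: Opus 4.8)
The plan is to deduce the corollary from Theorem~\ref{thm:synch2}, whose hypotheses require a \emph{single} value of $\xi$ that simultaneously (i) admits an irreducible $U\in{\cal W}$ with $U(G_k(t)-\xi I)\succeq 0$ for all $t,k$, and (ii) makes the dissipativity inequality $(y-z)^TV(f(y,t)-\xi\sum_k D_k y-f(z,t)+\xi\sum_k D_k z)\le -c\|y-z\|^2$ hold with $c>0$. The hypothesis supplies (ii) at $\xi=\xi_M=\sup\Xi$, but $\xi_M$ furnishes (i) only in a limiting sense. Rather than argue that the supremum is attained, I would apply Theorem~\ref{thm:synch2} at a slightly reduced value $\xi'=\xi_M-\delta$ with $\delta>0$ small, and show that both (i) and (ii) survive this perturbation.

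For (i), the key observation is that every $U\in{\cal W}$ is a symmetric weighted graph Laplacian, hence $U\succeq 0$ (indeed $x^TUx=\half\sum_{i\neq j}(-U_{ij})(x_i-x_j)^2\ge 0$). Consequently $\Xi$ is downward closed: if $U$ witnesses $\xi\in\Xi$, then for any $\xi'<\xi$ the symmetric part of $U(G_k(t)-\xi' I)$ equals that of $U(G_k(t)-\xi I)$ plus $(\xi-\xi')U\succeq 0$, so the same $U$ witnesses $\xi'$. Since $\xi_M=\sup\Xi$, every $\xi'<\xi_M$ lies in $\Xi$ and therefore possesses an irreducible witness $U\in{\cal W}$.

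For (ii), I would use that the left-hand side is affine in $\xi$: it equals $(y-z)^TV(f(y,t)-f(z,t))-\xi\,Q(y,z,t)$ with $Q(y,z,t)=(y-z)^TV\sum_k D_k(y-z)$. The assumption $VD_k\succeq 0$ gives $V\sum_k D_k+\sum_k D_k^TV\succeq 0$, so $Q\ge 0$, and moreover $Q\le\mu\|y-z\|^2$ where $\mu$ is a uniform bound on the largest eigenvalue of $\half(V\sum_k D_k+\sum_k D_k^TV)$. Writing the inequality at $\xi'=\xi_M-\delta$ as the one at $\xi_M$ plus $\delta\,Q$ yields the bound $-c\|y-z\|^2+\delta\mu\|y-z\|^2=-(c-\delta\mu)\|y-z\|^2$. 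Choosing $\delta>0$ small enough that $c-\delta\mu>0$ (and $\xi'>0$), the dissipativity condition holds at $\xi'$ with positive constant $c'=c-\delta\mu$. Applying Theorem~\ref{thm:synch2} with this $\xi'$, the witness $U$, the matrix $V$, and constant $c'$ then gives global synchronization.

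The main obstacle is precisely that $\xi_M$ need not belong to $\Xi$: irreducibility is not preserved under the limits that define the supremum, so a normalized sequence of witnesses could converge to a reducible matrix, barring a direct application of Theorem~\ref{thm:synch2} at $\xi_M$. The two robustness facts above — downward-closedness of $\Xi$ (from $U\succeq 0$) and the one-sided stability of the dissipativity bound (from $Q\ge 0$ being bounded) — are exactly what let me sidestep attainment of the supremum and still invoke Theorem~\ref{thm:synch2}.
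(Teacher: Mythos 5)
Your proposal is correct and follows the route the paper intends: the corollary is stated without proof as a direct application of Theorem~\ref{thm:synch2} at $\xi=\xi_M$. Your two robustness observations --- that $\Xi$ is downward closed because $U\succeq 0$ for $U\in{\cal W}$, and that the dissipativity inequality degrades only by $\delta\,Q\le\delta\mu\|y-z\|^2$ when $\xi$ is lowered to $\xi_M-\delta$ --- are a genuine (and correct) refinement that handles the possible non-attainment of the supremum, a point the paper glosses over.
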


The quantity $\xi_M$ allows us to compare different sets of graphs.  If one set of graphs has a higher $\xi_M$ than another set, then this suggests that the first set of graphs can synchronize the system with less coupling (as expressed by $D_k$).

For time-invariant $G_k$ (or $G_k(t)$ belongs to a finite fixed set of matrices for all $t$), we can use a bisection approach to compute $\xi_M$ \cite{wu:synch-inequality:2006}.  
Let $Q$ be an $n$ by $n-1$ matrix whose columns form an orthonormal basis of $e^{\perp}$.
Consider for a fixed 
$\xi$ the following feasibility semidefinite programming (SDP) problem:

{\em Find $U=U^T$ such that $\forall k$ $U(G_k-\xi I)\succeq 0$, $Ue = 0$
$ \forall i\neq j$ $U_{ij}\leq 0$ and $Q^TUQ\succeq I$.}

Using a bisection search to repeatedly solve the SDP problem above allows us to compute $\xi_M$.

\begin{algorithm} 
\SetAlgoLined
\KwData{$G_k$, $\epsilon$, lower and upper bound $lb$, $ub$}
\KwResult{$\xi_M$}
$\xi \leftarrow \frac{1}{2}(lb+ub)$\;
\While{$|ub-lb| > \epsilon$}{ 
      \eIf {SDP problem is infeasible for $\xi$}{
	       $ub \leftarrow \xi$\;
       }{ 
            $lb \leftarrow \xi$\;
            }
     $\xi \leftarrow \frac{1}{2}(lb+ub)$\;
     }
$\xi_M \leftarrow \xi$\;   
\caption{Compute $\xi_M$}\label{alg:xi}
\end{algorithm}

For example, consider the following set of Laplacian matrices $G_k$
\begin{equation*} 
G_1 = \begin{pmatrix}
    1  &   0 &   -1  &   0 &    0\\
     0  &   2  &   0  &  -1  &  -1\\
    -1   &  0  &   3  &  -1  &  -1\\
    -1  &  -1  &   0  &   3  &  -1\\
     0  &   0  &   0  &  -1   &  1
\end{pmatrix}
\end{equation*}

\begin{equation*}
G_2 = \begin{pmatrix}
     2  &  -1  &   0  &   0  &  -1\\
     0  &   1  &   0  &  -1  &   0\\
     0  &   0  &   2  &  -1  &  -1\\
    -1  &   0  &   0  &   1  &   0\\
    -1  &   0  &  -1  &  -1  &   3    
\end{pmatrix}
\end{equation*}

\begin{equation*}
G_3 = \begin{pmatrix}
     1  &  -1  &   0  &   0  &   0\\
    -1  &   2  &  -1  &   0  &   0\\
     0  &  -1  &   1  &   0  &   0\\
     0  &   0  &  -1  &   1  &   0\\
    -1  &  -1  &  -1  &   0  &   3
\end{pmatrix}
\end{equation*}
corresponding to the directed graphs in Fig. \ref{fig:graphs}.

\begin{figure}[htbp]
\centerline{\includegraphics[width=1.2in]{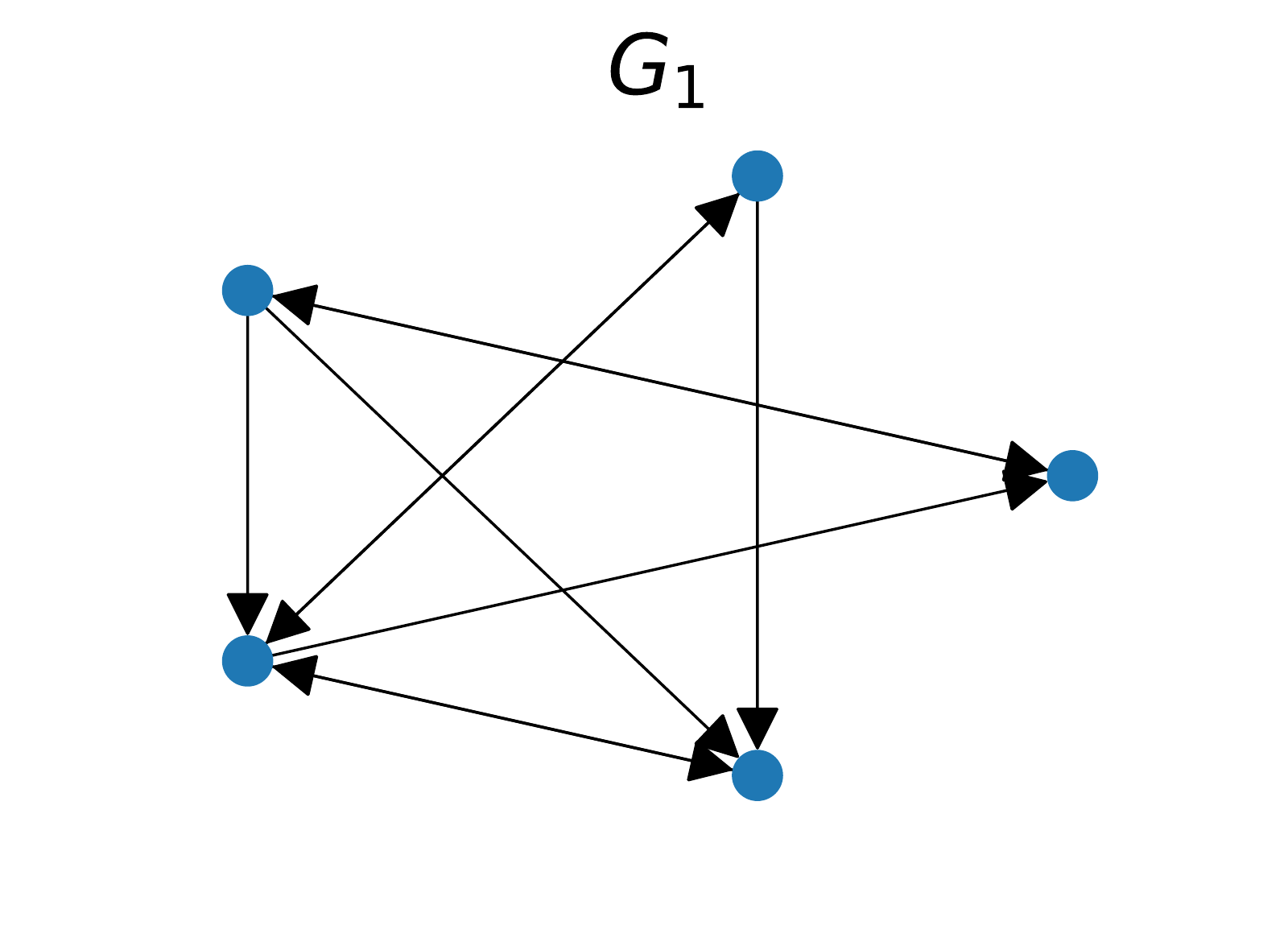}\includegraphics[width=1.2in]{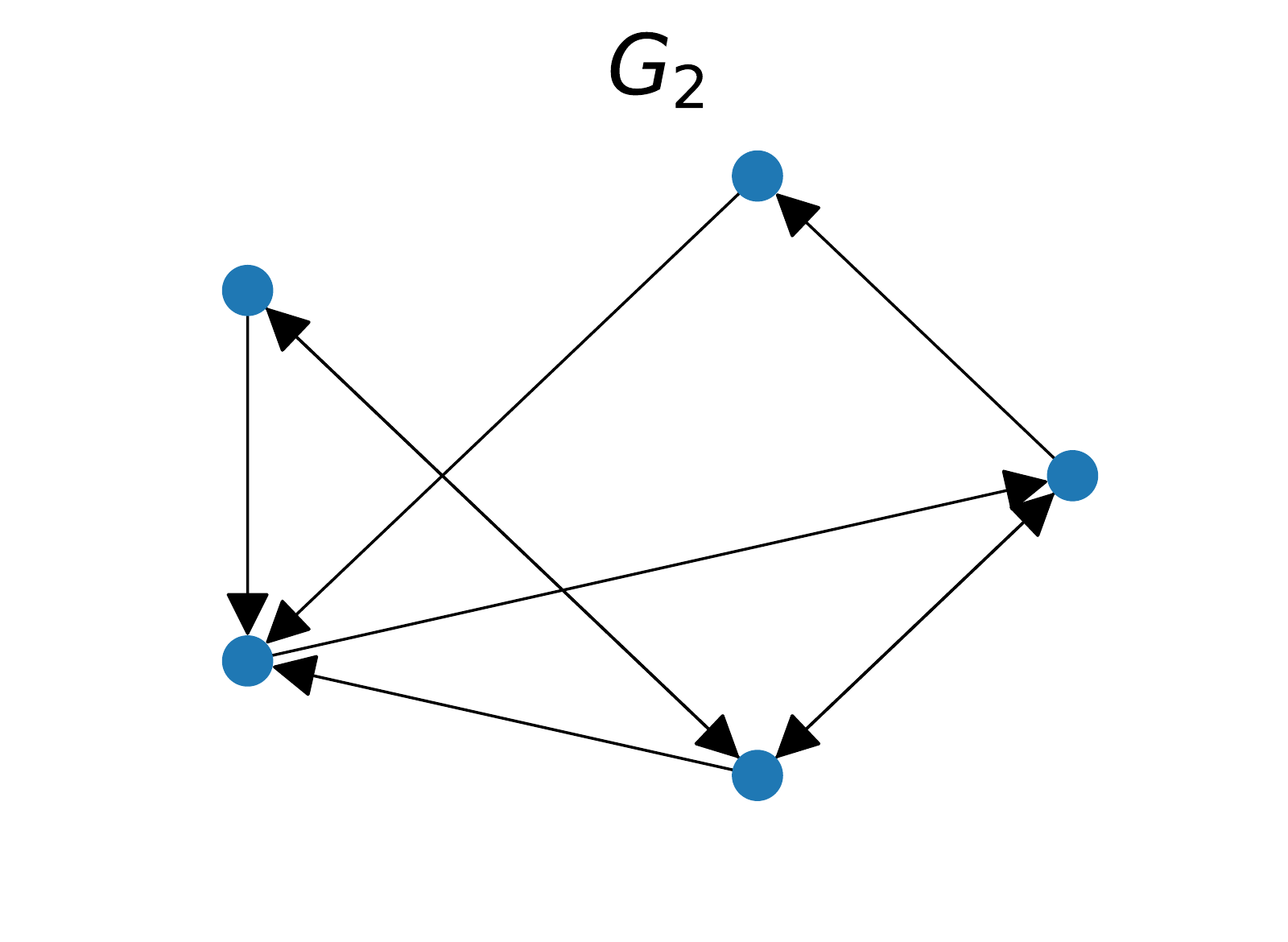}\includegraphics[width=1.2in]{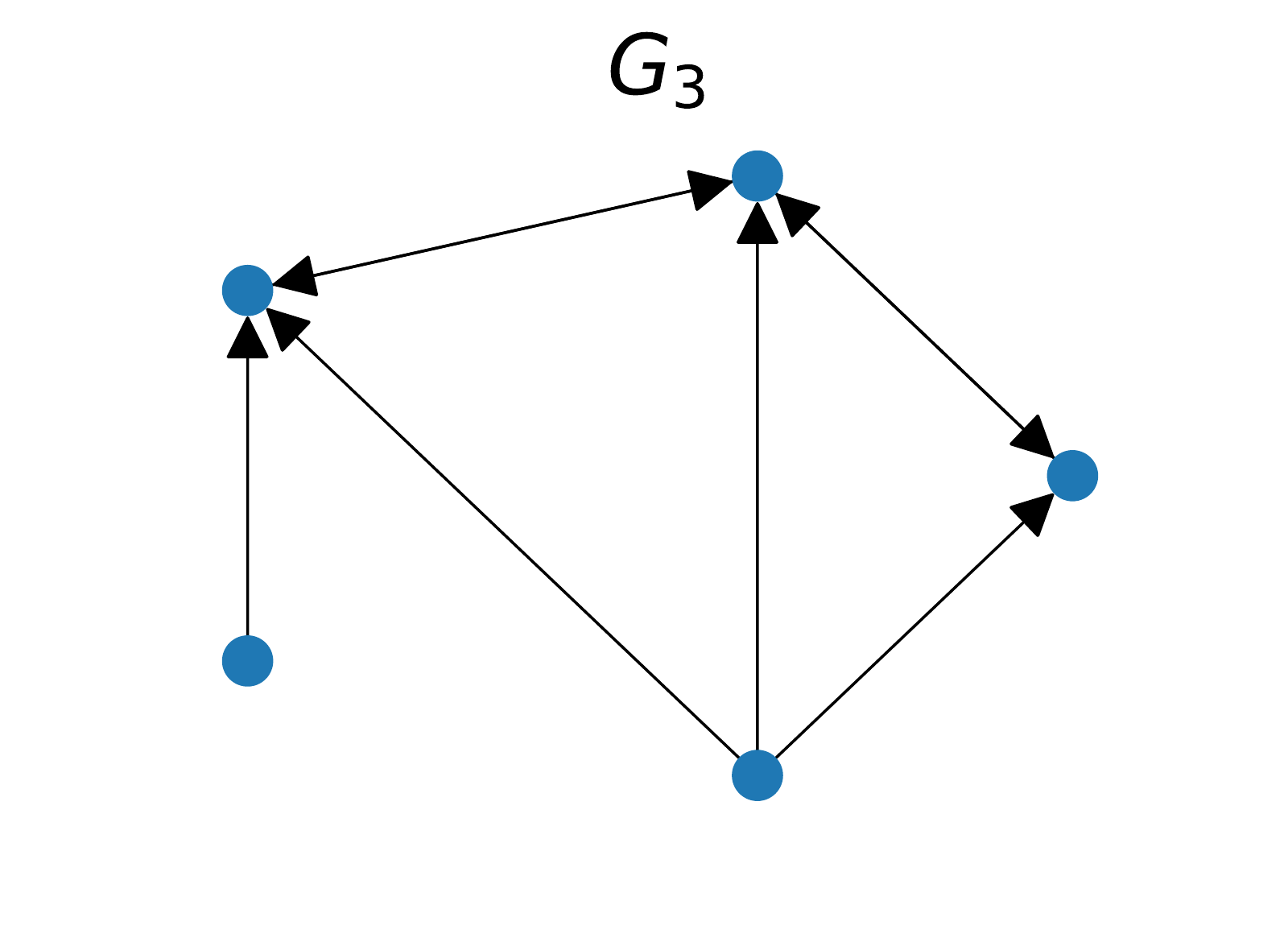}}
\caption{Directed graphs with Laplacian matrices $G_1$, $G_2$, and $G_3$.}
\label{fig:graphs}
\end{figure}

Running Algorithm \ref{alg:xi} using the software package CVX \cite{cvx,gb08} to solve the SDP problem, we find that $\xi_M = 0.838$.
For real zero sums matrices $G_k$, the value of $\xi_M$ is finite and an upper and lower bound can be explicitly computed (which are needed to initialize the bisection search in Algorithm \ref{alg:xi}).
\begin{theorem}\label{thm:lowerbound}
If all $G_k$ are all real zero row sums matrices, then $\xi_M$ exists and is lower bounded by
$ \xi_M \geq \min_k \min_{x\neq 0,x\perp e}\frac{x^TG_k x}{x^Tx} $.
\end{theorem}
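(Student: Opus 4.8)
The plan is to establish the lower bound by exhibiting a single explicit witness matrix $U$ and determining the largest $\xi$ for which the defining condition of $\Xi$ holds with that choice; any such $\xi$ then lies in $\Xi$ and bounds $\xi_M$ from below. The natural candidate, as in the proof of Theorem~\ref{thm:synch2b}, is the complete-graph Laplacian $U = L_K = nI - J$, where $J = ee^T$ is the all-ones matrix. This matrix is symmetric, has zero row sums, has off-diagonal entries equal to $-1 \le 0$, and is irreducible, so $L_K \in {\cal W}$ is admissible. Its key feature is that it annihilates $e$ and acts as $n$ times the identity on $e^{\perp}$, which is exactly the subspace on which the Rayleigh quotient in the statement is evaluated.

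First I would reduce the condition $L_K(G_k - \xi I) \succeq 0$ to its symmetric part: writing $M_k = L_K(G_k - \xi I)$ and using $L_K = L_K^T$, the relevant quadratic form is $x^T(M_k + M_k^T)x = 2x^T L_K(G_k - \xi I)x$. I would then decompose an arbitrary vector as $x = \alpha e + y$ with $y \perp e$ and expand. The factor $x^T L_K$ collapses because $L_K x = ny$, and the remaining products split into a genuine $e^{\perp}$ part and several cross terms. The crucial point is that all cross terms and all $\alpha$-dependent terms cancel: this is forced by $L_K = nI-J$ together with the hypothesis $G_k e = 0$ (zero row sums), which gives $e^T G_k e = 0$, $y^T G_k e = 0$, and $e^T (G_k^T e) = 0$, leaving only $y^T G_k y$ and $y^T y$. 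The outcome is the clean identity
\begin{equation*}
x^T(M_k + M_k^T)x = 2n\left(y^T G_k y - \xi\, y^T y\right),
\end{equation*}
so that $L_K(G_k - \xi I) \succeq 0$ holds for every $k$ if and only if $\xi \le \min_k \min_{y \ne 0,\, y \perp e} \frac{y^T G_k y}{y^T y} =: \beta$. Since $y^T G_k y = y^T \tfrac12(G_k + G_k^T) y$ is continuous on the unit sphere of $e^{\perp}$, the inner minima are attained, so $\xi = \beta$ itself is admissible, giving $\beta \in \Xi$ and hence $\xi_M \ge \beta$.

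For the assertion that $\xi_M$ exists as a finite number, I would show $\Xi$ is bounded above. Every $U \in {\cal W}$ is a weighted graph Laplacian, hence positive semidefinite with $Ue = 0$; irreducibility makes the zero eigenvalue simple, so $U$ admits an orthonormal eigenbasis of $e^{\perp}$ with strictly positive eigenvalues. For any $\xi \in \Xi$ with witness $U$ and any such eigenvector $y$ (with $Uy = \mu y$, $\mu > 0$), evaluating the symmetric part of $U(G_k - \xi I)$ at $y$ yields $2\mu(y^T G_k y - \xi\, y^T y) \ge 0$, so $\xi \le \frac{y^T G_k y}{y^T y} \le \max_{z \perp e} \frac{z^T G_k z}{z^T z} < \infty$. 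Thus $\Xi$ is bounded above and $\xi_M = \sup \Xi$ is a well-defined real number; this also supplies the explicit upper bound needed to initialize Algorithm~\ref{alg:xi}.

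The step I expect to be the main obstacle is the cancellation in the second paragraph. Unlike the doubly-stochastic setting of Theorem~\ref{thm:synch2b}, here we have only zero \emph{row} sums, so $J G_k \ne 0$ in general and the $e$-direction does not decouple for free; a nonzero column-sum cross term $\alpha\, (G_k^T e)^T y$ does appear and must be shown to cancel. Verifying that these surviving cross terms vanish, and pinning down exactly why $G_k e = 0$ suffices while zero column sums are not required, is the delicate part of the argument.
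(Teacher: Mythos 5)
Your proof is correct and takes essentially the same route as the paper: the paper simply cites Theorem 2 of its synchronization-inequality reference for the fact that the complete-graph Laplacian $U=L_K$ satisfies $L_K(G-\mu I)\succeq 0$ with $\mu=\min_{x\neq 0,\,x\perp e}x^TGx/x^Tx$, and your computation supplies exactly that cited fact, with the cross term you flag as delicate vanishing immediately because $L_Kx=ny$ already annihilates the $e$-component on the left before $G_k$ is ever applied. Your finiteness/boundedness argument is a correct addition that the paper leaves implicit until its later upper bound $\xi_M\leq\min_k\mu_2(G_k)$.
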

\begin{proof}
The proof of Theorem 2 in \cite{wu:synch-inequality:2006} shows that for $U$ equal to the Laplacian matrix of the complete graph,
$U(G-\mu I)\succeq 0$, where 
$\mu = \min_{x\neq 0,x\perp e} x^TGx$ and the conclusion follows. \eop
\end{proof}

\begin{theorem} \label{thm:ximbound}
If all $G_k$ are real matrices with nonpositive off-diagonal elements and zero row and column sums, then $\xi_M \geq 0$.  If in addition $G_k+G_k^T$ are all irreducible, then
$\xi_M > 0$.
\end{theorem}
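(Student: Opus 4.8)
The plan is to reuse the complete-graph Laplacian $U = L_K = nI - J$ that already drives the proof of Theorem~\ref{thm:lowerbound}, now exploiting the extra hypothesis that each $G_k$ has zero column sums. The first thing I would record is that zero column sums means $e^T G_k = 0$, so every column of $JG_k$ vanishes and hence $JG_k = 0$. Consequently $U G_k = (nI - J)G_k = n G_k$, which collapses the defining constraint $U(G_k - \xi I) \succeq 0$ into $nG_k - \xi L_K \succeq 0$. Taking symmetric parts (recall that for a non-symmetric real matrix $A$, $A \succeq 0$ means $A + A^T \succeq 0$, and $L_K$ is symmetric), this is the condition $\frac{n}{2}(G_k + G_k^T) - \xi L_K \succeq 0$, which is all I need to analyze.

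For the first claim I would set $\xi = 0$ and show $\frac{n}{2}(G_k + G_k^T) \succeq 0$. This is immediate: $G_k + G_k^T$ is symmetric, has zero row sums (since $G_k e = 0$ and $G_k^T e = (e^T G_k)^T = 0$), and has nonpositive off-diagonal entries, so it is a symmetric graph Laplacian, hence diagonally dominant with nonnegative diagonal and therefore positive semidefinite by Gershgorin. Since $L_K$ is irreducible and lies in ${\cal W}$, this exhibits $0 \in \Xi$ and gives $\xi_M \geq 0$.

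For the second claim I would sharpen this to a strict bound. Both $\frac{n}{2}(G_k + G_k^T)$ and $L_K$ annihilate $e$, so I only need $\frac{n}{2}(G_k + G_k^T) - \xi L_K$ to be positive semidefinite on $e^\perp$. On that subspace $L_K$ acts as $nI$ (for $x \perp e$ one has $x^T L_K x = n\|x\|^2$), so the requirement becomes $\frac{n}{2}x^T(G_k + G_k^T)x \geq \xi n \|x\|^2$ for all $x \perp e$, equivalently $\xi \leq \frac{1}{2}\lambda_2(G_k + G_k^T)$, where $\lambda_2(\cdot)$ is the second-smallest eigenvalue. When $G_k + G_k^T$ is irreducible its underlying graph is connected, so its zero eigenvalue is simple and $\lambda_2(G_k + G_k^T) > 0$. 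Taking $\xi = \min_k \frac{1}{2}\lambda_2(G_k + G_k^T)$, a minimum over finitely many strictly positive numbers and hence itself positive, then places $\xi$ in $\Xi$ and yields $\xi_M \geq \xi > 0$.

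The Gershgorin/diagonal-dominance step and the spectral bookkeeping are routine. The one step carrying the real content is the reduction $U G_k = n G_k$: without zero column sums this identity fails and $U(G_k - \xi I)$ does not simplify, so I expect the main thing to get right is the clean use of the column-sum hypothesis together with correctly isolating the $e$-direction, so that it is the spectral gap $\lambda_2 > 0$ (and not the zero eigenvalue) that controls the strict positivity in the second part.
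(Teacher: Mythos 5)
Your proof is correct and follows essentially the same route as the paper, which obtains $\xi_M \geq \min_k \min_{x\perp e}\frac{x^TG_kx}{x^Tx}$ via the same choice $U=L_K$ (Theorem~\ref{thm:lowerbound}) and then invokes a cited corollary for the nonnegativity/positivity of that Rayleigh quotient under the zero-column-sum and irreducibility hypotheses. You have simply made explicit the computation ($JG_k=0$, reduction to $\frac{n}{2}(G_k+G_k^T)-\xi L_K\succeq 0$ on $e^\perp$, and the spectral gap $\lambda_2(G_k+G_k^T)>0$) that the paper delegates to references.
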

\begin{proof}
Follows from Corollary 2 in \cite{wu:synch-inequality:2006} and Theorem \ref{thm:lowerbound}.\eop
\end{proof}

Note that Theorem \ref{thm:ximbound} implies that if the directed graphs whose Laplacian matrices are $G_k$ are balanced\footnote{A directed graph is balanced if the indegree of each vertex is equal to its outdegree.}, then $\xi_M\geq 0$. If they are balanced and strongly connected\footnote{Balanced graphs are weakly connected if and only if they are strongly connected.}, then $\xi_M > 0$.

\begin{definition}
For a real matrix $G$ with zero row sums, define $\mu_2(G)$ as the minimum of $Re(\lambda)$ among all eigenvalues $\lambda$ not corresponding to the eigenvector $e$.
\end{definition}

\begin{theorem}
If $G_k$ are all real matrices with zero row sums, then $\xi_M \leq \min_k \mu_2(G_k)$.
\end{theorem}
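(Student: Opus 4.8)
The plan is to show that every element of $\Xi$ is at most $\mu_2(G_k)$ for each fixed $k$; taking the supremum over $\Xi$ on the left and the minimum over $k$ on the right then yields the claim. So I would fix an arbitrary $\xi\in\Xi$ and an index $k$, and by the definition of $\Xi$ obtain an irreducible $U\in{\cal W}$ with $U(G_k-\xi I)\succeq 0$. First I would record the structure of such a $U$: being symmetric with zero row sums and nonpositive off-diagonal entries, $U$ is a weighted undirected graph Laplacian, hence positive semidefinite, and its irreducibility forces the kernel to be exactly $\mathrm{span}(e)$, so that $w^HUw>0$ for every complex $w\notin\mathrm{span}_{\mathbb C}(e)$.

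Next I would unpack the positive-semidefiniteness convention. Since $U(G_k-\xi I)$ need not be symmetric, $U(G_k-\xi I)\succeq 0$ means that the real symmetric matrix $M:=U(G_k-\xi I)+(G_k^T-\xi I)U$ is positive semidefinite, so $v^HMv\ge 0$ for every complex vector $v$. I would then select an eigenvalue $\lambda$ of $G_k$ with $\Re(\lambda)=\mu_2(G_k)$ together with a genuine eigenvector $v$ that is not a scalar multiple of $e$; such a pair exists precisely because $\lambda$ is, by the definition of $\mu_2$, an eigenvalue not corresponding to the eigenvector $e$ (and when $\lambda\ne 0$ this is automatic, since $v\in\mathrm{span}_{\mathbb C}(e)$ would force $G_kv=0$).

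The computation I would carry out is short. Using $G_kv=\lambda v$ and $U^T=U$, the first term gives $v^HU(G_k-\xi I)v=(\lambda-\xi)\,v^HUv$, while the second term $v^H(G_k^T-\xi I)Uv$ is exactly the complex conjugate of the first; adding them and using that $v^HUv$ is real yields $v^HMv=2(\Re(\lambda)-\xi)\,v^HUv$. Writing $u:=v^HUv$, the positive-semidefiniteness of $M$ gives $(\Re(\lambda)-\xi)\,u\ge 0$. The decisive point — and the one I expect to require the most care — is the strict positivity $u>0$: it is here that I must combine the irreducibility of $U$ (which pins its kernel down to $\mathrm{span}(e)$) with the fact that the chosen $\mu_2$-eigenvector $v$ is not parallel to $e$. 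Granting $u>0$, I may divide to conclude $\xi\le\Re(\lambda)=\mu_2(G_k)$. Since $k$ and $\xi\in\Xi$ were arbitrary, it follows that $\xi_M=\sup\Xi\le\min_k\mu_2(G_k)$.
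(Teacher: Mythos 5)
Your proposal is correct and follows essentially the same route as the paper's own proof: pick an eigenvector $v$ of $G_k$ not parallel to $e$ for an eigenvalue achieving $\mu_2(G_k)$, evaluate the quadratic form of $U(G_k-\xi I)$ at $v$, and use that irreducibility of $U\in\mathcal{W}$ forces $v^HUv>0$ to conclude $\Re(\lambda)\ge\xi$. Your write-up is in fact slightly more careful than the paper's, since you explicitly invoke the symmetrization convention for positive semidefiniteness of the non-Hermitian matrix $U(G_k-\xi I)$.
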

\begin{proof}
Similar to the proof of Theorem 3 in \cite{wu:nonreciprocal:2003}, 
let $\lambda_k$ be an eigenvalue of $G_k$ with eigenvector $v_k\neq e$. Let $U\in {\cal W}$ be such that
$U(G_k - \mu I) \succeq 0$ for all $k$ for some real number $\mu$. Since the kernel of $U$ is spanned by $e$, $v_k$ is not in the kernel of $U$. Since 
$(G_k - \mu I)v_k = \left(\lambda_K - \mu\right)v_k$, this
implies that $v_k^*U(G_k-\mu I)v = \left(\lambda_K - \mu\right)v_k^*Uv_k$. Since $U(G_k-\mu I)\succeq 0$ this
means that $Re(v_k^*U(G_k-\mu I)v_k)\geq 0$.  Since $U$ is symmetric positive semidefinite
and $v_k$ is not in the kernel of $U$, $v_k^*Uv > 0$. This means that $Re(\lambda_k) - \mu \geq 0$ for all $k$.\eop
\end{proof}

It is clear that $\xi_M(\{G_1,\cdots G_r\}) \leq \min_i \{\xi_M(G_1),\cdots ,\xi_M(G_r)\}$.
For the 3 graphs in the above example, 
$0.838 = \xi_M(\{G_1,\cdots G_r\}) < \min_i \{\xi_M(G_1),\cdots ,\xi_M(G_r)\} = 0.852$.
One interesting question to ask is what are the conditions for which $\xi_M(\{G_1,\cdots G_r\}) = \min_i \{\xi_M(G_1),\cdots ,\xi_M(G_r)\}$.The following result gives one such condition.

\begin{theorem} \label{thm:normal}
If $G_k$ are all real normal matrices with zero row sums, then $\xi_M = \min_k \min_{x\neq 0,x\perp e}\frac{x^TG_k x}{x^Tx} = \min_k \mu_2(G_k)$.
\end{theorem}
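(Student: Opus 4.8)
The plan is to sandwich $\xi_M$ between the two bounds already proved and then show that, for normal zero-row-sum matrices, these two bounds coincide. Concretely, Theorem \ref{thm:lowerbound} gives $\xi_M \geq \min_k \min_{x\neq 0,\,x\perp e}\frac{x^TG_k x}{x^Tx}$, while the preceding theorem (the one asserting $\xi_M \leq \min_k \mu_2(G_k)$) supplies the matching upper bound. Both hypotheses are met here, since real normal matrices with zero row sums are in particular real zero-row-sum matrices. Hence it suffices to establish the pointwise identity $\min_{x\neq 0,\,x\perp e}\frac{x^TG x}{x^Tx} = \mu_2(G)$ for each real normal $G$ with $Ge=0$; taking the minimum over $k$ then forces the two outer bounds to agree and pins down $\xi_M$ exactly.

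To prove the pointwise identity, first I would record two consequences of normality. Since $G$ is normal, $\|Gx\|=\|G^Tx\|$ for every $x$, so $Ge=0$ forces $G^Te=0$; thus $e$ is also a left null vector, and the symmetric part $H:=\half(G+G^T)$ satisfies $He=0$, i.e. $e$ is an eigenvector of $H$ with eigenvalue $0$. Second, because $x^TGx$ is a scalar equal to its own transpose, $x^TGx = x^THx$ for every real $x$, so the Rayleigh quotient of $G$ restricted to $e^{\perp}$ coincides with that of the real symmetric matrix $H$. As $H$ is symmetric and preserves $e^{\perp}$, Courant--Fischer identifies $\min_{x\neq 0,\,x\perp e}\frac{x^THx}{x^Tx}$ with the smallest of the $n-1$ eigenvalues of $H$ other than the one attached to $e$.

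The crux is then to identify the spectrum of $H$ with the real parts of the spectrum of $G$. Here I would invoke unitary diagonalizability of normal matrices: write $G=Q\Lambda Q^*$ with $Q$ unitary and $\Lambda=\diag(\lambda_j)$. Since $G$ is real, $G^T=G^*=Q\bar\Lambda Q^*$, whence $H=Q\,\tfrac12(\Lambda+\bar\Lambda)\,Q^*$ is unitarily similar to the diagonal matrix with entries $Re(\lambda_j)$. Therefore the eigenvalues of $H$ are exactly $\{Re(\lambda_j)\}$, with the eigenvector $e$ matched to the eigenvalue $0$ of $G$ on both sides. Consequently the minimum over $e^{\perp}$ equals $\min\{Re(\lambda_j):\lambda_j \text{ not attached to } e\}=\mu_2(G)$, which is the desired identity.

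The step I expect to carry the real weight, and the only place normality is genuinely used, is this last identification of the eigenvalues of $H$ with $\{Re(\lambda_j)\}$: for a non-normal $G$ the symmetric part need not have the real parts of $G$'s eigenvalues as its spectrum, so the two sandwiching bounds can be strictly separated. Indeed the worked example preceding the theorem, where $0.838=\xi_M(\{G_1,\dots,G_r\}) < \min_i\{\xi_M(G_1),\dots,\xi_M(G_r)\}=0.852$, shows the identity can fail without the normality assumption. Everything else reduces to the Rayleigh-quotient bookkeeping of the second paragraph together with the two sandwiching theorems. \eop
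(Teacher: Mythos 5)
Your proof is correct and follows essentially the same route as the paper: sandwich $\xi_M$ between the lower bound of Theorem \ref{thm:lowerbound} (which rests on the fixed choice $U=L_K$) and the upper bound $\xi_M\leq\min_k\mu_2(G_k)$ from the preceding theorem, then close the gap via the identity $\min_{x\neq 0,\,x\perp e}\,x^TGx/x^Tx=\mu_2(G)$ for real normal zero-row-sum $G$. The only difference is that the paper cites this identity from an earlier reference, whereas you supply a self-contained derivation via the symmetric part and unitary diagonalization, which is a nice addition but not a different method.
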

\begin{proof}
Follows from the fact that for a real normal zero row sum matrix $G$, $\xi_M(G) = \min_{x\neq 0,x\perp e} x^TGx = \mu_2(G)$ with a fixed $U=L_K\in {\cal W}$ \cite{wu:synch-inequality:2006} and thus $\xi_M \geq \min_k \mu_2(G_k)$.\eop
\end{proof}

\subsection{Coupled linear systems}\label{sec:linear}
Consider now the case where $f(x,t) = Ax+b$ for some matrix $A$ and vector $b$ and the coupling does not vary with time.  In this case, the state equation is described by
$\dot{x} = (I\otimes A)x + \sum_{k=1}^r(G_k\otimes D_k)x + (I\otimes b)$.
For the case $r=1$, the eigenvalues of $I\otimes A + G\otimes D$ are equal to $A+\lambda_{j}D$ where $\lambda_j$ are the eigenvalues of $G$.
If the matrices $G_k$ can be simultaneously  transformed to upper triangular form, then by Lemma \ref{lem:kron-ev}
the eigenvalues of $(I\otimes A)+\sum_{k=1}^r (G_k\otimes D_k)$ are equal to the eigenvalues of $A+\sum_k\lambda_{kj}D_k$, where $\lambda_{kj}$ are the eigenvalues of $G_k$.
Thus we can study the stability of this linear system by looking at the eigenvalues of $A+\sum_k\lambda_{kj}D_k$ for each set of eigenvalues $\{\lambda_{1j},\cdots , \lambda_{rj}\}$.
Extending the approach in \cite{wu:chua_kronecker_product_synch_95}, we can study a map $\phi :\C^r \rightarrow \R$ which maps $\{\lambda_{1j},\cdots , \lambda_{rj}\}$ to the most positive real part of the eigenvalues of $A+\sum_k\lambda_{kj}D_k$.  
Define $\S = \{x:\phi(x) < 0\}$.  Then the network is synchronizing if all sets of eigenvalues $\{\lambda_{1j},\cdots, \lambda_{rj}\}$ lie in $\S$, except for the set $\{0,\cdots,0\}$ corresponding to the eigenvectors $e$ for each $G_k$. The shape and size of $\S$ provide insight on how the graph topologies affect the synchronizability of the networked system.

\section{Discrete-time systems}\label{sec:discrete}

Consider a network of coupled discrete-time systems with the following state equations:
\begin{equation}\begin{array}{lcl} \label{eqn:discrete-time}
&&x(p+1)  =  \left(\begin{array}{c}x_1(p+1)\\\vdots \\x_n(p+1)\end{array}\right) \\
&&= \left(I - \sum_kG_k(p)\otimes D_k(p)\right)
\left(\begin{array}{c}f(x_1(p),p)\\\vdots \\f(x_n(p),p)\end{array}\right) + u(p) \\
&&= \left(I-\sum_kG_k(p)\otimes D_k(p)\right) F(x(p),p) + u(p)
\end{array}
\end{equation}

\begin{theorem} \label{thm:cml_top}
Consider the network of coupled discrete-time systems with state equation Eq. (\ref{eqn:discrete-time}) where
$G_k(p)$ are normal commuting $n$ by $n$ matrices with zero row sums for each $k$ and $p$.  Let
$V$ be a symmetric positive definite matrix such that
$\left(f(z,p)-f(\tilde{z},p)\right)^TV\left(f(z,p)-f(\tilde{z},p)\right) \leq c(z-\tilde{z})^TV(z-\tilde{z})$
for some $c>0$ and all $p$, $z$, $\tilde{z}$.  Let $|u_l(p)-u_m(p)| \rightarrow 0$ for all $l$, $m$ as $p\rightarrow\infty$ and let $V$ be decomposed\footnote{For instance via the Cholesky decomposition.} as
$V=C^TC$. Then the coupled systems synchronize, i.e. $x_l(p)\rightarrow x_m(p)$ for all $l$,$m$ as $p\rightarrow\infty$ if for each $p$ and for each $\lambda_{ki} \in S_k$
\begin{equation}\label{eqn:discrete}
\left\|I-\sum_k\lambda_{ki}CD_k(p)C^{-1}\right\|_2< \frac{1}{\sqrt{c}}
\end{equation} 
where $S_k$ are the eigenvalues of $G_k$ not corresponding to $e$.
\end{theorem}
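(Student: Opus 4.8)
The plan is to diagonalize the network coupling with a single unitary change of basis, decouple the $n$ modes, and reduce synchronization to a mode-by-mode contraction in the $V$-weighted norm $\|C\cdot\|_2$. Since the $G_k(p)$ are normal and commuting with common zero row sums, they share an orthonormal eigenbasis $W=[v_1,\dots,v_n]$ with $v_1=e/\sqrt{n}$ and $W^*G_k(p)W=\mathrm{diag}(\lambda_{k1}(p),\dots,\lambda_{kn}(p))$ where $\lambda_{k1}(p)=0$. First I would set $z(p)=(W^*\otimes I)x(p)$ and use $(W^*\otimes I)(G_k\otimes D_k)=(\mathrm{diag}(\lambda_{kj})\otimes D_k)(W^*\otimes I)$ to obtain the block-decoupled dynamics $z_j(p+1)=\left(I-\sum_k\lambda_{kj}(p)D_k(p)\right)g_j(p)+\tilde u_j(p)$, where $g_j(p)=\sum_l\overline{W_{lj}}\,f(x_l(p),p)$ and $\tilde u_j(p)=\sum_l\overline{W_{lj}}\,u_l(p)$. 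Because $W$ is unitary and $V\succ0$, the error energy $E(p):=\sum_{j\geq 2}z_j(p)^*Vz_j(p)$ equals $\sum_l(x_l-\bar x)^TV(x_l-\bar x)$, so $E(p)\to0$ is equivalent to the desired synchronization; the mode $j=1$ (the average, corresponding to $e$) is excluded, matching the hypothesis over $\lambda_{ki}\in S_k$.

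Second, I would pass to the $C$-norm. Since $V=C^TC$ is invertible, $C\left(I-\sum_k\lambda_{kj}D_k\right)=\left(I-\sum_k\lambda_{kj}CD_kC^{-1}\right)C$, so each error block obeys $\|Cz_j(p+1)\|_2\leq q_j(p)\|Cg_j(p)\|_2+\|C\tilde u_j(p)\|_2$ with $q_j(p)=\left\|I-\sum_k\lambda_{kj}(p)CD_k(p)C^{-1}\right\|_2<1/\sqrt{c}$ by hypothesis. The crucial step is to bound $\sum_{j\geq 2}\|Cg_j(p)\|_2^2$ by $c\,E(p)$: orthogonality of $v_j$ to $v_1$ gives $\sum_l\overline{W_{lj}}=0$ for $j\geq2$, so Parseval for $W$ yields $\sum_{j\geq2}\|Cg_j\|_2^2=\sum_l(f(x_l,p)-\bar f)^TV(f(x_l,p)-\bar f)$, which by the variance-as-average-pairwise-distance identity equals $\frac{1}{2n}\sum_{l,m}(f(x_l,p)-f(x_m,p))^TV(f(x_l,p)-f(x_m,p))$; the $V$-Lipschitz hypothesis bounds this termwise by $\frac{c}{2n}\sum_{l,m}(x_l-x_m)^TV(x_l-x_m)=c\,E(p)$.

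Third, I would combine these estimates. Writing $\rho=\sqrt{c}\,\sup_{j\geq2,p}q_j(p)<1$ and $\epsilon(p)=\big(\sum_{j\geq2}\|C\tilde u_j(p)\|_2^2\big)^{1/2}$, Minkowski's inequality on the $\ell^2$ sum over $j$ gives $\sqrt{E(p+1)}\leq\rho\sqrt{E(p)}+\epsilon(p)$. The same vanishing $\sum_l\overline{W_{lj}}=0$ makes $\tilde u_j$ depend only on the spread $u_l-\bar u$, so $|u_l-u_m|\to0$ forces $\epsilon(p)\to0$; the elementary fact that $a_{p+1}\leq\rho a_p+\epsilon_p$ with $\rho<1$ and $\epsilon_p\to0$ implies $a_p\to0$ then yields $E(p)\to0$, hence $x_l(p)-x_m(p)\to0$ for all $l,m$.

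The main obstacle I expect is the second step, the reduction of the nonlinearity: the projections $g_j$ mix every $f(x_l,\cdot)$, and it is precisely the cancellation $\sum_l\overline{W_{lj}}=0$ on the error modes that lets me rewrite $g_j$ through differences $f(x_l,\cdot)-f(x_m,\cdot)$ and invoke the one-sided Lipschitz bound; absent this cancellation the nonlinear term cannot be controlled by $E(p)$. A secondary technical point is that the strict inequality $q_j(p)<1/\sqrt{c}$ must hold with a uniform gap so that $\rho<1$, which is automatic when the $G_k$ and $D_k$ take values in a finite set (in particular when they are time-invariant).
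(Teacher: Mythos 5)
Your proposal is correct and follows essentially the same route as the paper's proof: both exploit the shared orthonormal eigenbasis of the normal commuting $G_k$ to split off the modes orthogonal to $e$, show that condition (\ref{eqn:discrete}) makes the coupled map a contraction (up to the vanishing input term) on that complement, and reduce general $V$ to the Euclidean case by conjugating with $C$. The only cosmetic differences are that you control the nonlinearity via the pairwise-difference (variance) identity where the paper projects $F(x)-F(y)$ onto the complement of the synchronization manifold, and you explicitly flag the uniform-gap issue ($\rho<1$ versus a per-$p$ strict inequality), which the paper's proof glosses over in the same way.
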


\begin{proof} 
First consider the case $V=I$, i.e., $f(x,p)$ is Lipschitz 
continuous in $x$ with Lipschitz
constant $\sqrt{c}$. Let $W = I-\sum_k G_k\otimes D_k$.
Since $G_k(t)$ is normal and simultaneously diagonalizable, $G_k(t)$ has an orthonormal set of 
eigenvectors $v_i$ with $e=v_1$.
Denote $\cal A$ as the subspace of vectors of the form
$e\otimes v$.  
Since $G_ke = 0$
it follows that $\A$ is in the kernel of $G_k\otimes D_k$.  
Let us denote the subspace orthogonal to $\A$ by $\B$.  A vector $x(p)$ is decomposed as $x(p) =
y(p)+z(p)$ where $y(p)\in \A$ and $z(p)\in \B$.  By hypothesis
$\|F(x(p),p)-F(y(p),p)\|\leq c\|z(p)\|$.  We can decompose
$F(x(p),p)-F(y(p),p)$ as $a(p)+b(p)$ where
$a(p)\in \A$ and $b(p)\in \B$.  Note that $F(y(p),p)$ and $a(p)$ are in
$\A$ and therefore $(G_k\otimes D_k)F(y(p),p) =  (G_k\otimes D_k)a(p) = 0$.
\begin{eqnarray*}& &x(p+1) =WF_k(x(p),p) +u(p)\\ 
&=& 
a(p) + F_k(y(p),p) + W b(p) +u(p)
\end{eqnarray*}
Since $a(p)\bot b(p)$, $\|b(p)\| \leq
\|F(x(p),p)-F(y(p),p)\| \leq c\|z(p)\|$.  Since $b(p)\in \B$,
it is of the form $b(p) = \sum_{i>1,j} \alpha_{i} v_{i}\otimes f_{j}(p)$.
\begin{eqnarray*}&&W b(p) = 
\sum_{i>1,j} \alpha_{ij} v_i \otimes f_j - \sum_{i>1,j,k} \alpha_{ij}\lambda_{ki} v_i\otimes D_k f_j \\&&= 
\sum_{i>1,j} \alpha_{ij} v_i\otimes \left(I-\sum_k \lambda_{ki}D_k\right)f_j
\end{eqnarray*}
By hypothesis this implies that
\begin{eqnarray*}&&\left\|W b(p)\right\|^2 = \sum_{i>1,j} |\alpha_{ij}|^2 \left\|\left(I-\sum_k \lambda_{ki}D_k\right) f_j\right\|^2\\
&<& \sum_{i>1,j} |\alpha_{ij}|^2\|f_j\|^2/c \leq \|z(p)\|^2
\end{eqnarray*}  
Let $w(p)$ be the orthogonal projection of $u(p)$ onto $\B$.
Since $a_k(p)+F_k(y(p),p)$ is in $\A$, $z(p+1)$ is the orthogonal
projection of $W b(p) + u(p)$ onto $\B$ and thus $\|z(p+1)\| \leq
\|W b(p)\|+\|w(p)\|$.  The above shows that there exists $\beta < 1$ such that
$\|W b(p)\| \leq \beta\|z(p)\|$.
Thus $\|z(p+1)\| \leq \beta\|z(p)\| + \|w(p)\|$.  Note that $\sum_k (G_k\otimes D_k)u(p) \rightarrow 0$ as $p\rightarrow\infty$ and thus $w(p) \rightarrow 0$ as $p\rightarrow \infty$.  This implies that  $z(p)\rightarrow 0$ as $p\rightarrow
\infty$ and $x(p)$ approaches the synchronization 
manifold $\A$.

For general $V$, let $C^TC$ be the decomposition of $V$.
Applying the state transformation $\tilde{x} = (I\otimes C)x$, we get
$\tilde{x}(p+1) = \sum_k(I-G_k\otimes CD_kC^{-1})\left(\begin{array}{c} \tilde{f}_k(\tilde{x}_1(p),p)\\ \vdots \\
\tilde{f}_k(\tilde{x}_n(p),p)\end{array}\right)+\tilde{u}(p)$
where $\tilde{f}(\tilde{x}_i,p) = Cf(C^{-1}\tilde{x}_i,p)$.
We can then apply the result from the $V=I$ case
and the conclusion follows.\eop
\end{proof}

Theorem \ref{thm:cml_top} implies that if $G_k = \xi L_k$, $L_k$ are commuting normal Laplacian matrices with the graph sum's reversal containing a spanning directed tree and $\xi$ is small enough, then the discrete system synchronizes.
If $V=I$ and $D_k(p)$ are also normal commuting matrices, then Eq. (\ref{eqn:discrete}) can be simplified to:
$\left|1-\sum_k\lambda_{ki}\mu_{kj}\right|\leq \frac{1}{\sqrt{c}}$
for all $\lambda_{ki}\in S_k$ where $\mu_{kj}$ are the eigenvalues of $D_k$.

\section{Conclusions}
We study coupled dynamical systems coupled via multiple directed networks and provide criteria under which they synchronize for both continuous-time systems and discrete-time systems. In general, under suitable conditions, two conclusions can be drawn from these synchronization criteria.
First, if the coupling between individual systems (as expressed by $D_k$) are strong enough, then the graph sum of all the networks jointly containing a spanning directed tree is sufficient to synchronize the network, even though a single network may not be connected. This is an intuitive conclusion as it implies that there is a node that influences every other node via a set of edges where each edge in the set belongs to at least one of the networks.
Secondly, even if the coupling of a single network $D_k$ is not sufficient to synchronize the systems, but the sum $\sum_k D_k$ does, then the synchronization can still occur with multiple networks if each of the network is well connected, for instance, if they are all strongly connected and balanced.

\label{lastpage}
\bibliography{quant,markov,consensus,secure,synch,misc,stability,cml,algebraic_graph,graph_theory,control,optimization,adaptive,top_conjugacy,ckt_theory,math,number_theory,matrices,power}

\end{document}